\newtheorem{proposition}{Proposition}
\newtheorem{lemma}{Lemma}
\newtheorem{definition}{Definition}
\newcommand{\sys}{Spinner\xspace}
\newcommand{\ie}{i.\,e.,\/~}
\begin{document}

%\title{Partitioning 1 Billion Vertices for the Cloud}
\title{\sys: Scalable Graph Partitioning in the Cloud}

%\author{
%\IEEEauthorblockN{Claudio Martella}\\
%\IEEEauthorblockA{VU University Amsterdam, c.martella@vu.nl}\\
%\and
%\IEEEauthorblockN{Dionysios Logothetis}\\
%\IEEEauthorblockA{Telefonica Research, dl@tid.es}\\
%\and
%\IEEEauthorblockN{Andreas Loukas}\\
%\IEEEauthorblockA{TU Delft, a.loukas@tudelft.nl}\\
%\and
%\IEEEauthorblockN{Georgos Siganos}\\
%\IEEEauthorblockA{Qatar Computing Research Institute, gsiganos@qf.org.qa}
%}

\author{Claudio Martella, Dionysios Logothetis, Andreas Loukas, and Georgos Siganos}

\maketitle

\begin{abstract}
Several organizations, like social networks, store and routinely analyze large
graphs as part of their daily operation. Such graphs are typically distributed
across multiple servers, and graph partitioning is critical for efficient graph
management. Existing partitioning algorithms focus on finding graph partitions
with good locality, but disregard the pragmatic challenges of integrating
partitioning into large-scale graph management systems deployed on the cloud,
such as dealing with the scale and dynamicity of the graph and the compute environment.
%Such architectures are characterized by the scale and  the dynamic
%nature of the graphs, and the dynamic nature of the underlying compute
%environment. Current approaches either do not scale well or are designed only
%for static graphs, making them impractical for incorporating them in a graph
%management system.  Existing partitioning algorithm focus on partitioning
%locality and are applicable only to static graphs.  Changes in the input graph
%or in the size of the underlying compute environment may force a
%re-partitioning, which can be an expensive task.

%In this paper, we aim at a solution that performs substantially better than the
%most practical solution currently used, hash partitioning, but is nearly as
%practical. 

In this paper, we propose \emph{\sys}, a scalable and adaptive graph
partitioning algorithm based on label propagation designed on top of the Pregel
model. \sys scales to massive graphs, produces partitions with locality and
balance comparable to the state-of-the-art and efficiently adapts the
partitioning upon changes.  We describe our algorithm and
its implementation in the Pregel programming model that makes it possible to
partition billion-vertex graphs. We evaluate \sys with a variety of synthetic
and real graphs and show that it can compute partitions with quality comparable
to the state-of-the art.  In fact, by using \sys in conjunction with the Giraph
graph processing engine, we speed up different applications by a factor of $2$
relative to standard hash partitioning.  
\end{abstract}

\begin{IEEEkeywords}
Pregel, graph partitioning, graph processing, cloud computing.
\end{IEEEkeywords}

%\begin{abstract}
%\input{abstract}
%\end{abstract}

\section{Introduction}\label{sec:intro}
%Graphs are flexible structures to represent entities and their relationships.
%From social networks, bioinformatics, transportation networks, the Internet, to
%the Web, graphs can be analysed to extract insightful information about the
%underlying dynamics. These graphs are constantly increasing in size, and their
%timely processing can no longer be performed on a single machine.

Graph partitioning is a core component for scalable and efficient graph
processing. The size of graphs available today, whether a social network, the
Web, or a protein interaction network~\cite{He2008}, requires graph management
systems~\cite{Malewicz2010, Gonzalez2012, Pujol2011, Shao2012} to distribute
the graph, typically across shared-nothing clusters or clouds.  Due to the
inherent dependencies that characterize graphs, producing partitions with good
locality is critical in minimizing communication overheads and improving system
scalability\cite{LumsdaineGHB07}. At the same time, partitions that balance
processing across the cluster can improve resource utilization and reduce
overall processing latency.

However, integrating graph partitioning in graph management systems introduces
a number of pragmatic challenges that are disregarded by existing solutions.
First, the \emph{scale} of the graphs, often reaching billions of vertices and
edges, makes it hard to provide locality and balance at the same time. For
instance, the Metis algorithm~\cite{Karypis1995, Karypis1999} produces
partitionings with good locality and balance, but doing so incurs high
computational cost and requires a global view of the graph that makes it
impractical for large graphs~\cite{Tsourakakis2014, Khayyat2013, Tian2013}.
Alternative algorithms maintain balanced partitions by introducing centralized
components to enforce strict constraints on the size of the
partitions~\cite{Ugander2013}, limiting scalability as well.  As a result,
systems often resort to lightweight solutions, such as hash partitioning,
despite the poor locality that it offers.

%The hash partitioning approach: (i) scales as it is a fully distributed
%algorithm, (ii) produces partitions that are balanced on the number of
%vertices, although not on the number of edges and (iii) is stable with respect
%to changes in the graph, although not with respect to changes in the number of
%partitions. These desirable properties make it a practical solution.

%Second the algorithms may impose restrictions with respect to the software and
%hardware platform that they can run on. For instance, ParMetis requires a
%shared memory machine, while X requires Y. Arguably, several data centers today
%consist of shared-nothing commodity hardware, especially due to the emergence
%of cloud computing. This prohibits.. X,Y,Z.

%Second, some challenges stem from the platform they were designed for. Although
%the most prevalent platform, especially with the emergence of cloud computing,
%is shared-nothing clusters, some of these algorithms are hard to run on such
%systems.  For instance, even ParMetis, the distributed version of Metis,
%reuqires a shared-memory system. Although some other systems

Second, graph management systems need to continuously adapt the partitioning of
the graph~\cite{Pujol2011}.  One one hand, graphs are naturally \emph{dynamic},
with vertices and edges constantly added and removed. Thus, the system must
regularly update the partitioning to maintain optimal application performance.
On the other hand, modern computing platforms are \emph{elastic}, allowing to
scale up and down on demand depending on workload requirements.  This implies
that the system must re-distribute the graph across a different number of
machines.
%Nevertheless, current approaches view partitioning as a one-shot process on
%static graphs~\cite{Karypis1995, Stanton2012, Ugander2013, Tsourakakis2014},
%without support for adaptation to graph changes or changes in the number of
%partitions. 
To maintain a good partitioning under these conditions, existing solutions
continuously re-execute the partitioning algorithm from scratch.  This
expensive approach prohibits frequent updates and wastes computational
resources. Furthermore, it does not consider the previous state of the
partitioned graph; a re-partitioning may compute totally different vertex
locations. This forces the system to shuffle a large fraction of the graph,
impacting performance. 

At the same time, recently emerged graph processing systems, like
Pregel~\cite{Malewicz2010} and Graphlab~\cite{Low2012}, provide platforms
customized for the design and execution of graph algorithms at scale. They
offer programming abstractions that naturally lead to distributed algorithms.
Further, their execution environments allow to scale to large graphs leveraging
clusters consisting of commodity hardware. In fact, this has resulted in
porting several graph algorithms on such systems~\cite{Quick2012, Perozzi2013,
graphlab}. However, no work has explored so far how the problem of graph
partitioning can benefit from these architectures.

In this paper, we propose to leverage such architectures for the development
of scalable graph partitioning algorithms. We introduce \emph{\sys}, a
partitioning algorithm that runs on top of Giraph~\cite{giraph}, an open source
implementation of the Pregel model~\cite{Malewicz2010}. \sys can partition
billion-vertex graphs with good locality and balance, and efficiently adapts to
changes in the graph or changes in the number of partitions.

%This offers the following benefits. (i) Giraph provides programming primitives
%that express distributed algorithms naturally, affording a scalable
%implementation. (ii) \sys can handle billion-vertex graphs leveraging
%shared-nothing commidity clusters, a prevalent compute platform. In general,
%\sys requires only off-the-self software and hardware. (iii) Such systems
%abstract low level details such as resource management and failures, making
%\sys trivial to deploy. For instance, Giraph's out-of-core graph management
%mechanism allows us to run \sys even on clusters with limited available main
%memory. 

\sys uses an iterative vertex migration heuristic based on the Label
Propagation Algorithm (LPA).  We extend LPA to a scalable, distributed
algorithm on top of the Pregel (Giraph) programming model.  As a design choice,
\sys avoids expensive centralizations of the partitioning
algorithm~\cite{Karypis1995, Ugander2013, Tsourakakis2014} that may offer
strict guarantees about partitioning at the cost of scalability. However, we
show that building only on the programming primitives of Pregel, \sys can still
provide good locality and balance, reconciling scalability with partitioning
quality.

%\sys is based on label propagation that computes partitionings with good
%locality and balance, comparable to the state-of-the-art. 
%Importantly, our algorithm scales to massive graphs, making it practical for
%real large-scale deployments.  

Further, we designed \sys to efficiently adapt a partitioning upon changes to
the graph or the compute environment. By avoiding re-computations from scratch,
\sys reduces the time to update the partitioning by more than 85\% even for
large changes ($2\%$) to the graph, allowing for frequent adaptation, and
saving computation resources.  Further, the incremental adaptation prevents
graph management systems from shuffling large portions of the graph upon
changes.

Finally, we show that we can use the resulting partitionings of \sys to improve
the performance of graph management systems. In particular, we use \sys to
partition the graph in the Giraph graph analytics engine, and show that this
speeds up processing of applications by up to a factor of 2.
%In this paper, we propose an approach that addresses the above limitation,
%making partitioning mpractical for real large-scale scenarios.

In this paper, we make the following contributions:

\begin{compactitem} 
	\item We introduce \sys, a scalable graph partitioning algorithm based
		on label propagation that computes k-way balanced partitions
		with good locality.
	\item We provide a scalable, open source implementation of the
		algorithm on top of the Giraph graph processing system. We
		extend the basic LPA formulation to support a decentralized
		implementation that reconciles scalability and balance
		guarantees. To the best of our knowledge, this is the first
		implementation of a partitioning algorithm on the Pregel model.
	\item We extend the core algorithm to adapt an existing partitioning
		upon changes in the graph or the or the number of partitions in
		an incremental fashion. This minimizes computation and offers
		partitioning stability.
	\item We evaluate \sys extensively, using synthetic and real graphs.
		We show that \sys scales near-linearly to billion-vertex
		graphs, adapts efficiently to dynamic changes, and
		significantly improves application performance when integrated
		into Giraph.
\end{compactitem}

The remaining of this paper is organized as follows. In Section~\ref{sec:def},
we outline \sys's design goals and give a brief overview of graph processing
systems. In Section~\ref{sec:lpa} we describe the \sys algorithm in detail.
Section~\ref{sec:impl} describes the implementation of \sys in the Pregel
model, while in Section~\ref{sec:eval} we present a thorough evaluation. In
Section~\ref{sec:related} we discuss related work, and Section~\ref{sec:concl}
concludes our study.

\section{Motivation and background}\label{sec:def}
To motivate our approach, in this section we first describe our design goals.
We then give a brief overview of graph processing architectures and discuss how
we can exploit such systems to build a scalable partitioning solution.

\subsection{Design goals}

At a high level, \sys must produce good partitionings but also be practical.
It must be able to manage the challenges arising from graph management systems,
such as scale and dynamism.

\textbf{Partitioning quality.} \sys must compute partitions with good
\emph{locality} and \emph{balance}. Typically, graph processing
systems~\cite{Malewicz2010, Low2012} and graph
databases~\cite{Shao2012,Wang2013} distribute vertices across machines. Because
communication in such systems coincides with graph edges, network traffic
occurs when edges cross partition boundaries. Therefore, a good partitioning
algorithm must minimize the cut edges~\footnote{The edges that cross partition
boundaries.}. Further, partitions that balance the load across machines improve
processing latency and resource utilization. For instance, in graph processing
systems the load depends on the number of messages exchanged and therefore on
the number of edges. In fact, although our approach is general, here we will
focus on balancing partitions on the number of edges they contain.
Additionally, unlike traditional community detection algorithms, we must
control the number $k$ of the partitions, as this is mandated by the underlying
compute environment, for example, the number of available machines or cores.
This objective is commonly referred to as \emph{k-way} partitioning.

%Although our approach is general, and the computed partitioning can be used in
%any type of graph management systems, here we will focus on the impact of graph
%partitioning on graph analytics systems, like Pregel. These types of systems
%and require that partitions are balanced on the number of edges they contain.
%For instance, on systems like Pregel, the processing load depends on the amount
%of messages exchanged among vertices and therefore on the number of edges.

\textbf{Scalability.} We must be able to apply \sys to billion-vertex graphs.
As a design principle, our algorithm must be lightweight and lend itself to a
distributed implementation. Note that providing strict guarantees about
locality and balance and, at the same time, scale to large graphs may be
conflicting requirements. Often, providing such guarantees requires expensive
coordination or global views of the graph \cite{Karypis1995}, preventing
scalability.  

In this paper, we take a different approach as we are willing to trade
partitioning quality and forego strict guarantees in favor of a more practical
algorithm that can scale.  In subsequent sections, we show how relaxing these
requirements allows a scalable distributed implementation that still produces
good partitions for all practical purposes.

\textbf{Adaptability.} \sys must be able adapt the partitioning to changes in
the graph or the underlying compute environment. For efficiency, we want to
avoid re-partitioning the graph from scratch, rather compute a new good
partitioning incrementally.  We also want to avoid large changes in the adapted
partitioning as they may force the shuffling of vertices in the underlying
graph management system.  Therefore, our algorithm should take into
consideration the most recent state of the partitioning.

\subsection{Leveraging the Pregel model}

In this paper, we propose to design a graph partitioning solution on top of
graph processing architectures such as \cite{Malewicz2010, giraph, Low2012,
Xin2013}.  These systems advocate a vertex-centric programming paradigm that
naturally forces users to express graph algorithms in a distributed manner.
Further, such systems are becoming an integral part of the software stack
deployed on cloud environments. Building on top of such a system renders our
partitioning solution practical for a wide range of compute environments.

In such models, graph algorithms are expressed through a vertex computation
function that the system computes for each vertex in the graph.  Vertex
computations are executed in parallel by the underlying runtime and can
synchronize through messages~\cite{Malewicz2010} or shared
memory~\cite{Low2012}.

\sys builds upon Pregel in particular, an abstraction that is
based on a synchronous execution model and makes it easy to scale to large
graphs. In Pregel, a program is structured as a sequence of well-defined
computation phases called \emph{supersteps}. During a superstep every vertex
executes the user-defined compute function in parallel. Vertices have an
associated state that they access from within the user-defined function, and
may communicate with each other through messages. In Pregel, messages are
delivered in a synchronous manner only at the end of a superstep.

The synchronous nature of the Pregel model avoids any expensive coordination
among vertices, allowing programs to scale near-linearly. In the following, we
will show that building a graph partitioning algorithm with this primitive
allows us to apply \sys to billion-vertex graphs.

%At superstep i, the input to the function C of v consists of messages sent to v
%by other vertices in superstep i − 1. Similarly, from within function C, vertex
%v can send messages to other vertices in the graph.  The termination of an
%algorithm execution depends on vertices voting to halt. From within the
%function C, a vertex can vote-to-halt, deactivating itself. A deactivated
%vertex does not execute function C, unless it receives messages. Execution
%stops when all vertices have voted to halt and there are no messages.  

\section{Proposed solution}\label{sec:lpa}
We have designed \sys based on the \emph{Label Propagation Algorithm (LPA)}, a
technique that has been used traditionally for community
detection~\cite{Barber2009}. We choose LPA as it offers a generic and well
understood framework on top of which we can build our partitioning algorithm as
an optimization problem tailored to our objectives.  In the following, we
describe how we extend the formulation of LPA to achieve the goals we set in
Section~\ref{sec:def}. We extend LPA in a way that we can execute it in a
scalable way while maintaining partitioning quality and efficiently adapt a
partitioning upon changes.

%For our proposed solution, we extend the \emph{label propagation algorithm}
%(LPA)\cite{Barber2009} in a number of ways.  The reason for our choice is that
%LPA is a well understood algorithm that fits well to a message-passing
%paradigm. LPA depends on decisions made by the vertices based solely on local
%information, such as in our requirements.

Before going into the details of the algorithm, let us introduce the necessary
notation. We define a graph as $G = \langle V, E \rangle$, where $V$ is the
set of vertices in the graph and $E$ is the set of edges such that an edge $e
\in E$ is a pair $(u, v)$ with $u, v \in V$.  
%LPA is designed to work on undirected graphs, hence $(u, v) = (v, u)$. 
We denote by $N(v) = \{ u \colon u \in V, (u, v) \in E \}$ the neighborhood of
a vertex $v$, and by $deg(v) = |N(v)|$ the degree of $v$. In a $k$-way
partitioning, we define $L$ as a set of labels $L = \{ l_{1}, \dots, l_{k} \}$
that essentially correspond to the $k$ partitions.  $\alpha$ is the labeling
function $\alpha\colon V \to L$ such that $\alpha(v) = l_{j}$ if label $l_{j}$
is assigned to vertex $v$.  

The end goal of \sys is to assign partitions, or labels, to each vertex such
that it maximizes edge locality and partitions are balanced. 

\subsection{K-way Label Propagation}
\label{sec:LPA}

We first describe how to use basic LPA to maximize edge locality and then
extend the algorithm to achieve balanced partitions.  Initially, each vertex in
the graph is assigned a label $l_{i}$ at random, with $0 < i \leq k$.
Subsequently, every vertex iteratively propagates its label to its neighbors.
During this iterative process, a vertex acquires the label that is more
frequent among its neighbors. Specifically, every vertex $v$ assigns a
different \emph{score} for a particular label $l$ which is equal to the number
of neighbors assigned to label $l$
\begin{align}
score(v, l) = \sum_{u \in N(v)} \delta(\alpha(u), l)
\label{form:score}
\end{align}
where $\delta$ is the Kronecker delta. Vertices show preference to labels with
high score. More formally, a vertex updates its label to the label $l_{v}$ that
maximizes its score according to the update function
\begin{align}
l_{v} = \underset{l}{\arg\max} ~score(v, l)
\label{form:LPA}
\end{align}
We call such an update a \emph{migration} as it represents a logical vertex
migration between two partitions.  

In the event that multiple labels satisfy the update function, we break ties
randomly, but prefer to keep the current label if it is among them.  This
break-tie rule improves convergence speed \cite{Barber2009}, and in our
distributed implementation reduces unnecessary network communication (see
Section~\ref{sec:impl}).  The algorithm halts when no vertex updates its label.

%It has been shown that removing this latter preference provides better
%solutions, but it may slow convergence \cite{}. 

%\subsection{Directed Label Propagation}
%\label{sec:directed}

Note that the original formulation of LPA assumes undirected graphs. However,
very often graphs are directed (e.g. the Web). Even the data models of systems
like Pregel allow directed graphs, to support algorithms that are aware of
graph directness, like PageRank.  To use LPA as is, we would need to convert
a graph to undirected. The naive approach would be to create an undirected edge
between vertices $u$ and $v$ whenever at least one directed edge exists between
vertex $u$ and $v$ in the directed graph.  

%and assume we want to calculate PageRank on top of it using Pregel.
This approach, though, is agnostic to the communication patterns of the
applications running on top.  Consider the example graph in Figure
\ref{fig:conversion} that we want to partition to 3 parts.  In the undirected
graph (right), there are initially 3 cut edges.  
%Now, imagine we want to move one vertex to another partition to remove a cut
%edge - ignore for now the solution where all the vertices migrate to same
%partition producing zero cut edges.  
At this point, according to the LPA formulation, which is agnostic of the
directness of the original graph, any migration of a vertex to another
partition is as likely, and it would produce one cut edge less. 

However, if we consider the directness of the edges in the original graph,
not all migrations are equally beneficial.  In fact, either moving vertex 2 to
partition 1 or vertex 1 to partition 3 would in practice produce less cut edges
in the directed graph.  Once the graph is loaded into the system and messages
are sent across the directed edges, this latter decision results in less
communication over the network.

\begin{figure}
  \centering
    \includegraphics[width=7cm]{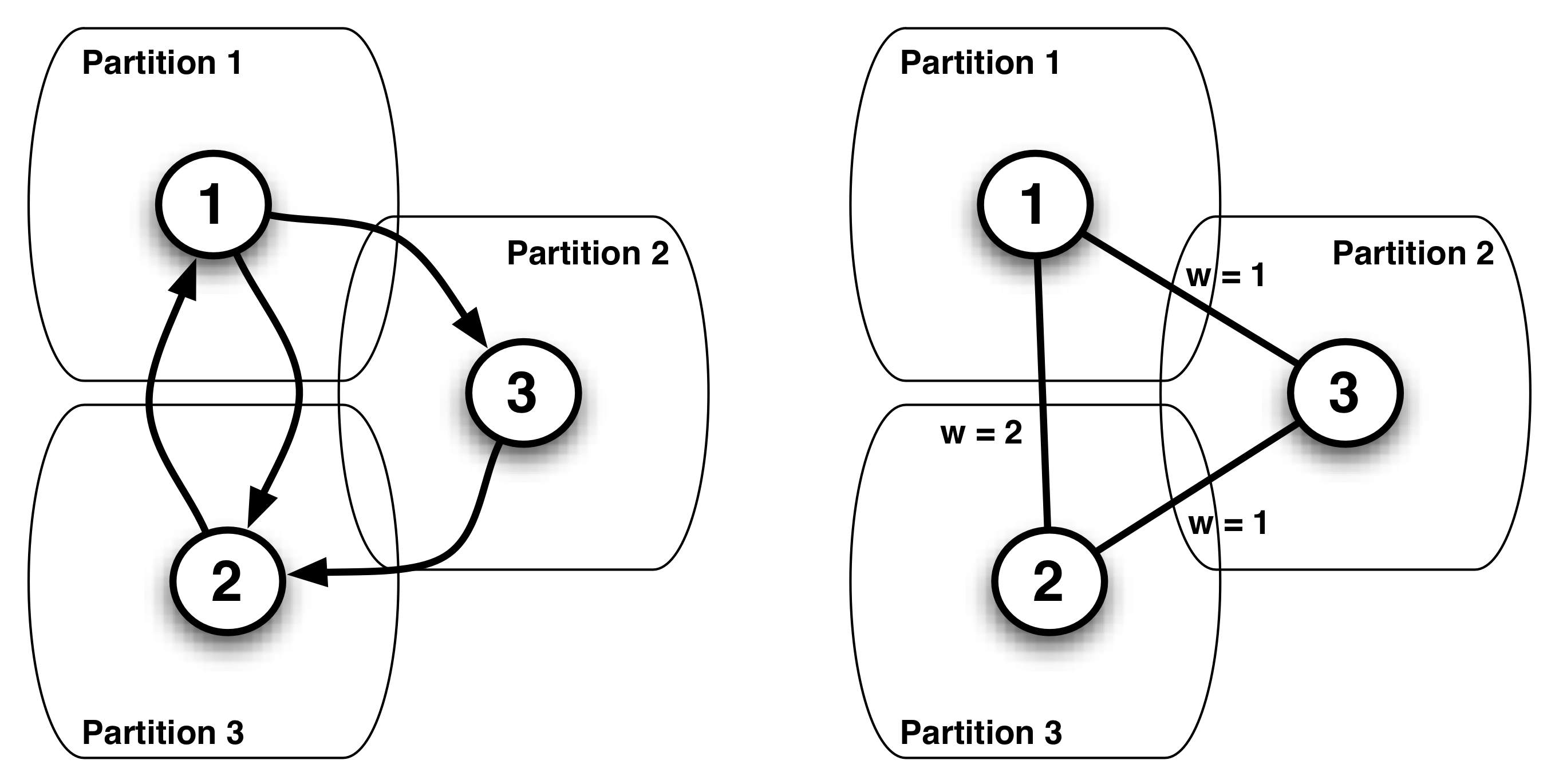}
  \caption{Conversion of a directed graph (left) to an undirected graph (right).}
  \label{fig:conversion}
  \vspace{-10px}
\end{figure}

\sys considers the number of directed edges connecting $u, v$ in the original
directed graph $D$ by introducing a weighting function $w(u,v)$ such that
\begin{align} 
   w(u, w) =
    \begin{cases}
      1, & \text{if} ~ (u, v) \in D \oplus (v, u) \in D \\
      2, & \text{if} ~ (u, v) \in D \wedge (v, u) \in D
    \end{cases}
\end{align}
where $\oplus$ is the logical XOR. We extend now the formulation in
(\ref{form:score}) to include the weighting function
\begin{align}
score'(v, l) = \sum_{u \in N(v)} w(u,v)\delta(\alpha(u), l)
\label{form:LPA2}
\end{align}
In practice, the new update function effectively counts the number of
messages exchanged locally in the system.

Notice that, so far, the formulation of LPA does not dictate in what order and
when vertices propagate their labels or how we should parallelize this process.
For instance, the propagation can occur in an asynchronous manner, with a
vertex propagating its label to its neighbors immediately after an update.  A
synchronous propagation, instead, occurs in distinct iterations with every
vertex propagating its label at the end of an iteration. In
Section~\ref{sec:impl}, we will see how we constraint the propagation to occur
in a synchronous fashion to retro-fit LPA to the Pregel model, allowing the
massive parallelization of a single LPA iteration.

\subsection{Balanced Label Propagation}
\label{sec:balanced}

Next, we will extend LPA to produce balanced partitions.  In \sys, we take a
different path from previous work \cite{Ugander2013} that balances partitions by
enforcing strict constraints on the partition sizes.  Such an approach requires
the addition to LPA of a centralized component to ensure the satisfaction of
the balance constraints across the graph. Essentially, it calculates which of
the possible migration decisions will not violate the constraints. This
component is used after each LPA iteration, potentially increasing the
algorithm overhead and limiting scalability. 

Instead, as our aim is to provide a practical and scalable solution, \sys
relaxes this constraint, only \emph{encouraging} a similar number of edges
across the different partitions. In particular, to maintain balance, we
integrate a \emph{penalty function} into the vertex score in (\ref{form:LPA2})
that penalizes migrations to partitions that are nearly full. Importantly, we
define the penalty function so that we can calculate it in a scalable manner.
%While in this work we focus on the presentation and evaluation of the more
%system-related aspects of \sys, we plan to investigate theoretical
%justifications and guarantees behind our approach in future work.

%In \sys, we relax the hard problem of obtaining perfectly balanced partitions,
%aiming instead at
%challenge to our goal comes from the requirement of a decentralized approach.

In the following, we consider the case of a \emph{homogeneous} system, where
each machine has equal resources. This setup is often preferred, for instance,
in graph processing systems like Pregel, to eliminate the problem of stragglers
and improve processing latency and overall resource utilization.

We define the \emph{capacity} $C$ of a partition as the maximum number of edges
it can have so that partitions are balanced, which we set to
\begin{align}
C = c \cdot \frac{| E |}{k}
\label{form:capacity}
\end{align}
where $c>1$ is a constant, and the \emph{load} $b(l)$ of a partition $l$ as the actual number
of edges in the partition
\begin{align}
b(l) = \sum_{v \in G} deg(v) \delta(\alpha(v), l)
\end{align}

The capacity $C$ represents the constraint that \sys puts on the load of the
partitions during an LPA iteration, and for homogeneous systems it is the same
for every partition. Notice that in an ideally balanced partitioning, every
partition would contain $|E|/k$ edges. However, \sys uses parameter $c$ to let
the load of a partition exceed this ideal value.  This allows for vertices to
migrate among partitions, potentially improving locality, even if this is going
to reduce balance.

At the same time, to control the degree of unbalance, we introduce the
following penalty function that discourages the assignment of vertices to
nearly full partitions. Given a partition $l$, we define the penalty function
$\pi(l)$ as
\begin{align}
\pi(l) =  \frac{b(l)}{C}
\end{align}
The closer the current load of a partition to its capacity is, the higher the
penalty of a migration to this partition is, with the penalty value ranging
from 0 to 1. Next, we integrate the penalty function into the score
function. To do so, we first normalize
(\ref{form:LPA2}), and reformulate the score function as follows
\begin{align}
score''(v, l) = \sum_{u \in N(v)} \frac{w(u,v)\delta(\alpha(u), l)}{\sum_{u \in N(v)} w(u,v)} - \pi(l)
\label{form:LPA3}
\end{align}

This penalty function has the following desirable properties.  First, using
parameter $c$ we can control the tradeoff between partition unbalance and
convergence speed. A larger value of $c$ increases the number of migrations
allowed to each partition at each iteration. This possibly speeds up
convergence, but may increase unbalance, as more edges are allowed to be
assigned to each partition over the ideal value $|E|/k$.

Second, it allows us to compute the score function in a scalable way.  Notice
that the locality score depends on per-vertex information.  Further, computing
the introduced penalty function only requires to calculate $b(l)$. This is an
aggregate across all vertices that are assigned label $l$. As we describe in
Section~\ref{sec:impl}, we can leverage the Giraph aggregation primitives to
compute $b(l)$ for all possible labels in a scalable manner.  As we show later,
the introduction of this simple penalty function is enough to produce
partitions with balance comparable to the state-of-the-art. 

\subsection{Convergence and halting}

Although proving the convergence properties of LPA is a hard problem in the general
case~\cite{Barber2009}, our additional emphasis on partition balance, namely that a vertex can migrate to improve balance despite a decrease in locality, allow us to provide some analytical guarantees about \sys's convergence and partitioning quality. 
One of the difficulties in proving convergence is that approaches based on LPA sometimes reach a limit cycle where the partitioning fluctuates between the same states. Nevertheless, we can analyze it from a stochastic perspective.
By using classic results from the theory of products of stochastic matrices~\cite{tsitsiklis1984problems,touri2012}, in the following we show that: (i) under sufficient conditions on the connectivity of the underlying graph, \sys converges exponentially-fast to an even balancing, and (ii) in the general case \sys convergences, but no guarantees are provided about the partitioning quality. 
For practical purposes, in this section we also provide a heuristic for deciding when to halt the execution of the algorithm.

%Proving convergence properties for LPA, such as optimality of the solution or
%even that the algorithm actually converges, is a hard problem in the general
%case~\cite{Barber2009}. 

%Regarding optimality, it is easy to observe that the global optimum solution
%for any graph would assign the same label to each vertex. However, because
%vertices make decisions based on local information only, LPA may converge to
%local optima.  In fact, this characteristic is at the basis of the ability of
%LPA to detect communities \cite{Barber2009}. 

%As far as convergence is concerned, in Section~\ref{sec:LPA}, we described the
%halting condition of the original LPA formulation; the algorithm stops when no
%vertex updates its label.  However, ensuring that the algorithm reaches this
%state requires that vertex updates occur one at a time in a sequential fashion.
%This ensures, for instance, that two vertices do not make conflicting updates
%concurrently.  However, this requirement makes it hard to parallelize LPA.
%Even in asynchronous parallel processing systems that propagate updates as soon
%as they occur, LPA cannot avoid cycles where the partitioning fluctuates
%between the same states, preventing the algorithm to converge. 

\textbf{Model.} We will abstract from the underlying graph by encoding the system state in a $k$-dimensional load vector $x = [B(l_1), B(l_2), \ldots B(l_k)] $. At each iteration $t$, Spinner moves a portion of the load of each partition to each of the other $k-1$ partitions. This can be captured by a partition graph $P_t = (L, Q_t)$, having one vertex for each label and an edge $(l_i,l_j) \in Q_t$ if and only if the portion $[X_t]_{ij}$ of load which migrates from partition $i$ to $j$ is larger than zero. 
Supposing that $x_0$ is the starting state, the state $x_t$ during iteration $t$ is given by
\begin{align}
	x_t = X_t\, X_{t-1} \cdots X_{1} \, x_{0} = X_{t:1}\, x_0. %= \left(\prod_{\tau = 1}^t X_\tau^\top \right)^{\hspace*{-1mm}\top} x_0.
\end{align}
We next show that, when the partition graph is $B$-connected, \sys converges exponentially-fast to an even balancing.
\begin{definition}[$B$-connectivity]
A sequence of graphs $\{P_{t>0}\}$ is $B$-connected if, for all $t>0$, each union graph $P_{Bt:B(t+1)-1} = (L, \, Q_{Bt} \cup Q_{Bt + 1} \cup \cdots \cup Q_{B(t+1)-1})$ is strongly connected.  
\label{def:bconnectivity}
\end{definition}
Simply put, $B$-connectivity asserts that each partition exchanges load with every other partition periodically. We can now state our first result:
% --------------------------------------------------------------------------------
% --------------------------------------------------------------------------------
\begin{proposition}
	If the partition graph $\{P_{t>0}\}$ is $B$-connected, one can always find constants $\mu \in (0,1)$ and $q \in \mathbf{R}^{+}$, for which Spinner converges exponentially $\|x_t - x^{\star} \|_{\infty} / \|x_0\|_{\infty} \leq q \mu^{t-1}$ to an even balancing $x^{\star} = [C\ C\ \ldots\ C]^\top\hspace{-1mm}$, with $C = |E|/k$. 
	\label{prop:bconnectivity}
\end{proposition}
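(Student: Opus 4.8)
I would read the load recursion $x_t = X_{t:1}\,x_0$ as a backward product of stochastic matrices and bring in the ergodicity theory of such products. The first step is to establish the structural properties of each realized migration matrix $X_t$: (i) $X_t \ge 0$, since its entries are fractions of load; (ii) $X_t$ is column-stochastic, $\mathbf{1}^{\top}X_t = \mathbf{1}^{\top}$, because a round of migrations moves edges between partitions but neither creates nor deletes them, so $\mathbf{1}^{\top}x_t = \mathbf{1}^{\top}x_0 = |E|$ for all $t$; (iii) $X_t$ is in fact doubly stochastic, $X_t\mathbf{1} = \mathbf{1}$ --- equivalently $x^{\star} = C\mathbf{1}$ is an equilibrium of the penalized update~(\ref{form:LPA3}): once every partition carries exactly $C$ edges the penalties $\pi(l)$ coincide across labels and the induced coupling --- symmetric, since the edge weights satisfy $w(u,v)=w(v,u)$ --- produces no net flow; and (iv) $[X_t]_{ii}>0$ and the support of $X_t$ is precisely the partition graph $P_t$, because some load always stays put, as the tie-breaking rule of~(\ref{form:LPA}) retains a vertex's current label whenever it is score-maximal, so every $P_t$ carries all self-loops.

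Given (ii)--(iv), the error $e_t = x_t - x^{\star}$ satisfies $e_t = X_{t:1}\,e_0$ (using $X_t x^{\star} = x^{\star}$), and since $\mathbf{1}^{\top}e_0 = |E|-|E| = 0$ the error stays in the subspace $\mathbf{1}^{\perp}$, which is invariant under every $X_t$ and on which every doubly stochastic matrix is non-expansive in $\|\cdot\|_{\infty}$. The second step is a uniform contraction over bounded time windows. Here I would invoke the classical consequence of $B$-connectivity together with strictly positive diagonals (see~\cite{tsitsiklis1984problems,touri2012}): there is an integer $N = O(kB)$ such that, for every $t$, the window product $X_{t+N-1:t}$ is a strictly positive doubly stochastic matrix with all entries at least some $\eta>0$ depending only on $N$ and on the smallest nonzero migration fraction occurring in the run. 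A doubly stochastic matrix with minimum entry $\eta$ contracts $\mathbf{1}^{\perp}$ by a factor at most $1-\eta$ in the infinity norm, so $\|e_{(m+1)N}\|_{\infty} \le (1-\eta)\,\|e_{mN}\|_{\infty}$.

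The third step merely assembles the pieces. Writing $t = mN+r$ with $0\le r<N$ and applying non-expansiveness on the trailing $r$ factors gives $\|e_t\|_{\infty} \le (1-\eta)^{\lfloor t/N\rfloor}\|e_0\|_{\infty}$; setting $\mu = (1-\eta)^{1/N}\in(0,1)$ and using $\|e_0\|_{\infty} \le \|x_0\|_{\infty}+C \le 2\|x_0\|_{\infty}$ (the average $C$ cannot exceed $\max_i [x_0]_i$) yields $\|x_t - x^{\star}\|_{\infty}/\|x_0\|_{\infty} \le q\,\mu^{t-1}$ with $q$ absorbing the leftover constant factors, which is exactly the asserted exponential convergence to the even balancing.

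I expect the genuine difficulty to lie in step one, and specifically in item (iii): arguing rigorously that the realized fluid map is doubly stochastic --- that $x^{\star}$ is an \emph{exact} fixed point of the weighted, penalized update rather than merely an approximate one --- and that the nonzero migration fractions over the finite run admit a uniform positive lower bound, so that $\eta$, and hence $\mu$ and $q$, can be taken independent of $t$. Once those two facts are secured, the convergence and its geometric rate follow from the theory of products of stochastic matrices essentially off the shelf.
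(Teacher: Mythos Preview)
Your overall architecture---model the load recursion as a backward product of nonnegative matrices and invoke ergodicity under $B$-connectivity---is the same as the paper's. The gap is in step~(iii), where you claim $X_t$ is \emph{doubly} stochastic. Your justification is that $x^{\star}=C\mathbf{1}$ is an equilibrium of the penalized update: equal penalties, symmetric weights, hence no net flow. But that argument only tells you that \emph{when the state is $x^{\star}$}, the realized map satisfies $X_t x^{\star}=x^{\star}$. The matrices $X_t$ are state-dependent: the realized $X_t$ at a generic state $x_{t-1}\neq x^{\star}$ is a different matrix, and nothing in your argument forces $X_t\mathbf{1}=\mathbf{1}$ for \emph{that} matrix. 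You have conflated ``$x^{\star}$ is a fixed point of the nonlinear dynamics'' with ``each realized linear map fixes $\mathbf{1}$,'' and the second does not follow from the first. Without row-stochasticity, the error $e_t=x_t-x^{\star}$ does not satisfy $e_t=X_{t:1}e_0$, and your contraction-on-$\mathbf{1}^{\perp}$ machinery does not get off the ground.

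The paper sidesteps this entirely by \emph{not} asking for double stochasticity. It takes stochasticity of $X_t$ directly from the definition of $[X_t]_{ij}$ as outgoing load fractions (so the row sums are $1$ by construction), adds a uniform lower bound $[X_t]_{ij}\geq 1/n$ on positive entries and positive diagonals, and then applies the Tsitsiklis/Touri ergodicity result: under $B$-connectivity the product $X_{t:1}$ converges exponentially to a rank-one matrix $\mathbf{1}\pi^{\top}$, hence $x_\infty=(\pi^{\top}x_0)\mathbf{1}$ has all components equal. Only at this last step is load conservation used---not to derive any matrix property, but to pin down the common value as $|E|/k$. So the uniform lower bound you worried about is simply \emph{assumed} in the model ($1/n$), and the fixed-point property of $x^{\star}$ is never needed as an input; it falls out of ergodicity plus conservation.
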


Proposition~\ref{prop:general} asserts that, when $\{P_{t>0}\}$ is not $B$-connected, Spinner also converges---though we cannot state anything about the quality of the achieved partitioning. 
\begin{proposition}
	\sys converges in bounded time.
	\label{prop:general}
\end{proposition}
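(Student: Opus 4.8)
The plan is to model the sequence of load vectors $x_t = X_{t:1}\,x_0$ as a product of stochastic matrices and invoke the classical convergence theory for such products~\cite{tsitsiklis1984problems,touri2012}. First I would argue that each per-iteration map $X_t$ is a (column- or row-) stochastic matrix: the penalty function $\pi(l) = b(l)/C$ in~(\ref{form:LPA3}) only discourages migration into partitions whose load approaches capacity $C$, so no partition ever receives more load than it can hold, and total load $|E|$ is conserved across iterations. Conservation of $\sum_l b(l) = 2|E|$ (or $|E|$, depending on the edge-counting convention fixed in Section~\ref{sec:def}) together with nonnegativity of the migrated portions $[X_t]_{ij}$ gives that $X_t$ is stochastic, so $x_t$ stays in the simplex scaled by $|E|$ for all $t$.

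Next I would split into the two regimes already separated in the text. If the partition graph $\{P_{t>0}\}$ happens to be $B$-connected, Proposition~\ref{prop:bconnectivity} already gives exponential convergence, hence a fortiori convergence in bounded time, so there is nothing more to do. The substantive case is when $B$-connectivity fails. Here the key observation is that the state space is effectively finite: the partition assignment $\alpha\colon V \to L$ takes values in a finite set (there are at most $k^{|V|}$ labelings), and the Spinner update rule~(\ref{form:LPA}) together with the deterministic tie-breaking rule (keep the current label when it is among the maximizers) makes each iteration a map on this finite configuration space. The randomized tie-breaks among genuinely distinct labels can be handled by noting that the algorithm halts precisely when no vertex strictly improves its score, so I would track a monotone potential — e.g. the total normalized locality score $\sum_v score''(v,\alpha(v))$ augmented to account for the balance penalty, analogous to the modularity-type potential used for LPA convergence arguments~\cite{Barber2009} — and show it is non-decreasing and bounded, so the algorithm reaches a fixed point (or, at worst, a limit cycle) within a number of steps bounded in terms of $|V|$, $k$, and the granularity of achievable scores. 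Combined with the stochastic-matrix picture, the load vector $x_t$ therefore stabilizes in bounded time even without $B$-connectivity; it simply may stabilize at a balancing that is not even.

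The main obstacle I anticipate is the limit-cycle issue flagged in Section~\ref{subsec:convergence} (``approaches based on LPA sometimes reach a limit cycle where the partitioning fluctuates between the same states''). A naive potential argument shows only that the potential is eventually periodic, not that the state itself converges — so "converges in bounded time" has to be read carefully. The cleanest fix is to lean on the stochastic-matrix viewpoint rather than the combinatorial one: by the ergodicity results for inhomogeneous products of stochastic matrices, $X_{t:1}$ converges to a limit (a rank-structured stochastic matrix whose block structure reflects the connected components of the eventual partition graph), and hence $x_t$ converges, regardless of whether the underlying labeling cycles. I would therefore phrase the bounded-time claim at the level of the load vector $x_t$ and the induced load imbalance, deriving an explicit bound on the mixing time of each $B$-block from the minimal positive entry of the $X_t$'s (which is bounded below because migrated portions are ratios of bounded integers, degrees and $|E|$), and noting that only finitely many distinct $B$-block patterns can occur. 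Making the "bounded time" quantitative — i.e. extracting an actual number of supersteps rather than mere eventual convergence — is where the care is needed; the qualitative statement follows directly from~\cite{touri2012}.
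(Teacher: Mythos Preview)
Your proposal circles the right idea but the argument has a gap exactly where the work is. The paper's proof hinges on a structural lemma (Lemma~\ref{lemma:partitioning}) that you never formulate: the label set $L$ can always be split into disjoint nonempty subsets $L_1,\ldots,L_p$ such that (i) each induced partition-graph sequence $\{P_{t>0}(L_i)\}$ is itself $B$-connected, and (ii) there is a bounded time $T$ after which no load ever passes between distinct $L_i$ and $L_j$. The lemma is proved by recursive splitting: if $\{P_{t>0}\}$ is not $B$-connected, take $T_1$ to be the latest time up to which it is (over all bounded $B$), split along the groups of labels that still exchange load after $T_1$, and recurse on each piece; termination holds because singletons are trivially $B$-connected. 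Given the lemma, Proposition~\ref{prop:general} is one line: after time $T$ the dynamics decouple into $p$ independent subsystems, Proposition~\ref{prop:bconnectivity} applies to each, and the total time to reach tolerance $\varepsilon$ is at most $T + \log_\mu(\varepsilon/q) + 1$.

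You gesture at this block picture (``block structure reflects the connected components of the eventual partition graph'', ``mixing time of each $B$-block'', ``finitely many distinct $B$-block patterns''), but you never establish that such a decomposition exists with a \emph{bounded} splitting time, nor that the blocks you end up with are themselves $B$-connected --- and that is precisely what is needed before Proposition~\ref{prop:bconnectivity} can be invoked on each block. Appealing to generic ``ergodicity results for inhomogeneous products of stochastic matrices'' from~\cite{touri2012} does not close this on its own: those results deliver ergodicity \emph{under} connectivity-type hypotheses, not in their absence, so you still need something playing the role of Lemma~\ref{lemma:partitioning} to put yourself in a position to apply them. The potential-function detour through the finite configuration space of size $k^{|V|}$ is a wrong turn: as you yourself concede, it yields at best eventual periodicity of the labeling rather than convergence of the load vector $x_t$, and the paper makes no use of it.
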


Due to lack of space, we refer to \cite{extended} for the derivation of the proofs of Propositions~\ref{prop:bconnectivity} and~\ref{prop:general}.
%%%%%%%%%%%%%%%%%%%%%%%%%%%%%

From a practical perspective, even if limit cycles are avoided often it is not worth spending compute cycles to achieve full convergence.  Typically, most of the
improvement in the partitioning quality occurs during the first iterations, and
the improvement per iteration drops quickly after that. We validate this with
real graphs in Section~\ref{sec:quality}.

In LPA convergence is detected by the absence of vertices changing label, referred to as the halting condition. A number of strategies have been proposed to guarantee the halting of LPA in synchronous systems (such as Pregel). These strategies are either based on heuristics for tie breaking and halting, or on the order in which vertices are updated~\cite{Yang2012a}. However, the heuristics are tailored to LPA's score function, which maximizes only locality. Instead, our score function does not maximize only locality, but also partition balance, rendering these strategies unsuitable. Hence, in \sys we use a heuristic that tracks how the quality of partitioning improves across the entire graph.
%
%In \sys, though, a vertex migration decision depends on the entire graph due to the use of the penalty function introduced in Section~\ref{sec:balanced} and, therefore, we cannot apply such criteria to \sys.
%

At a given iteration, we define the \emph{score} of the partitioning for graph $G$ as the sum of the current scores of each vertex
\begin{align}
score(G) = \sum_{v \in G} score''(v, \alpha(l_{v}))
\end{align}
As vertices try to optimize their individual scores by making local decisions,
this aggregate score gradually improves as well. We consider a partitioning to
be in a \emph{steady state}, when the score of the graph is not improved more
than a given threshold $\epsilon$ for more than $w$ consecutive iterations. The algorithm
halts when a steady state is reached.  Through $\epsilon$ we can control the
trade-off between the cost of executing the algorithm for more iterations and
the improvement obtained by the score function. At the same time, with $w$ it
is possible to impose a stricter requirement on stability; with a larger $w$,
we require more iterations with no significant improvement until we accept to
halt.

Note that this condition, commonly used by iterative hill-climbing optimization
algorithms, does not guarantee halting at the optimal solution. However, as we
present in Section \ref{sec:incr}, \sys periodically restarts the partitioning
algorithm to adapt to changes to the graph or the compute environment. This
natural need to adapt gives \sys the opportunity to jump out of local optima.

\subsection{Incremental Label Propagation}
\label{sec:incr}

As edges and vertices are added and removed over time, the computed
partitioning becomes outdated, degrading the global score. Upon such changes,
we want to update the partitioning to reflect the new topology without
repartitioning from scratch. Ideally, since the graph changes affect local
areas of the graph, we want to update the latest stable partitioning only in
the portions that are affected by the graph changes. 

Due to its local and iterative nature, LPA lends itself to incremental
computation.  Intuitively, the effect of the graph changes is to ``push'' the
current steady state away from the local optimum it converged to, towards a
state with lower global score. To handle this change, we restart the
iterations, letting the algorithm search for a new local optimum.  In the event
we have new vertices in the graph,  we initially assign them to the least
loaded partition, to ensure we do not violate the balance constraint.
Subsequently, vertices evaluate their new local score, possibly deciding to
migrate to another partition. The algorithm continues as described previously.

Upon graph changes, there are two possible strategies in restarting vertex
migration. The first strategy restarts migrations only for the vertices
affected by the graph changes, for instance, vertices adjacent to a newly added
edge. This strategy minimizes the amount of computation to adapt. The second
strategy, allows every vertex in the graph, even if it is not affected by a
change, to participate in vertex migration process. This latter strategy incurs
higher computation overhead, but increases the likelihood that the algorithm
jumps out of a local optimum. We have found that this computational overhead
does not affect the total running time of the algorithm significantly and,
therefore, opt for this latter strategy favoring better partitioning quality.

Note that the number of iterations required to converge to a new steady state
depends on the number of graph changes and the last state. Clearly, not every
graph change will have the same effect. Sometimes, no iteration may be
necessary at all. In fact, certain changes may not affect any vertex to the
point that the score of a different label is higher than the current one. As no
migration is caused, the state remains stable. On the other hand, other changes
may cause more migrations due to the disruption of certain weak local
equilibriums. In this sense, the algorithm behaves as a hill-climbing
optimization algorithm. As we will show in Section~\ref{sec:eval:dynamic}, even
upon a large number of changes, \sys saves a large fraction of the time of a
re-partitioning from scratch.

\subsection{Elastic Label Propagation}
\label{sec:elastic}
During the lifecycle of a graph, a system may need to re-distribute the graph
across the compute cluster.  For instance, physical machines may be
characterized by a maximum capacity in the number of vertices or edges they can
hold due to resources limitations, such as the available main memory. As the
graph grows and the load of the partitions reaches this maximum capacity, the
system may need to scale up with the addition of more machines, requiring to
re-distribute the graph. Alternatively, we may perform such re-distribution
just to increase the degree of parallelization and improve performance.
Conversely, if the graph shrinks or the number of available machines decreases,
we need to remove a number of partitions and, again, redistribute the graph. 
%In this case, some vertices must migrate from the removed partitions to the
%remaining ones.

In these scenarios, we want the algorithm to adapt to the new number of
partitions without repartitioning the graph from scratch. 
%Moreover, we want the algorithm to make decisions based on a decentralized and
%lightweight heuristics.  
\sys achieves this in the following way. Upon a change in the number of
partition, \sys lets each vertex decide independently whether it should migrate
using  a probabilistic approach. In the case we want to add $n$ new partitions
to the system, each vertex picks one of the new partitions randomly and
migrates to it with a probability $p$ such that
\begin{align}
p = \frac{n}{k+n}
\end{align}
In the case we want to remove $n$ partitions, all the vertices assigned to
those partitions migrate to one of the remaining ones. Each vertex chooses
uniformly at random the partition to migrate to. 

In both cases, after the vertices have migrated, we restart the algorithm to
adapt the partitioning to the new assignments. As in the case of incremental
LPA, the number of iterations required to converge to a new steady state
depends on factors, such as the graph size, and the number of partitions added
or removed.

By introducing these random migrations upon a change, this strategy clearly
disrupts the current partitioning, degrading the global score. However, it has
a number of interesting characteristics. First, it remains a decentralized and
lightweight heuristic as each vertex makes a decision to migrate independently.
Second, by choosing randomly, the partitions remain fairly balanced even after
a change in the number of partitions.  Third, this random change from the
current state of the optimization problem may allow the solution to jump
out of a local optimum. 

Note that, if the number $n$ of new partitions is large, the cost of adapting
the partitioning may be quite large, due to a large number of random
migrations. However, in practice, the frequency with which partitions are added
or removed is low compared, for example, to the number of times a partitioning
is updated due to changes in the graph itself. Furthermore, although vertices
are shuffled around, the locality of those vertices that do not migrate is not
completely destroyed, such as if the partitioning was performed from scratch.
The adaptation of the partitioning to the new number of partitions will
naturally take advantage of the late state of the partitioning.

\section{Pregel implementation}\label{sec:impl}

We implemented Spinner in Apache Giraph \cite{giraph} and open sourced the
code\footnote{http://grafos.ml}. Giraph is an open source project with a Java
implementation of the Pregel model. Giraph is a batch graph processing system
that runs on Hadoop \cite{hadoop}, and can run computations on graphs with
hundreds of billions of edges across clusters consisting of commodity machines. 

In this section, we describe the implementation details of Spinner.  We show
how we extend the LPA formulation to leverage the synchronous vertex-centric
programming model of a system like Giraph. We implemented a distributed
algorithm with no centralized component that scales and, at the same time,
achieves good partitioning quality.

%The computed partitioning can be used to run transparently standard graph
%algorithms with faster runtime performance due to the improved balancing and
%the minimized communication overhead.

\subsection{Vertex-centric partitioning}

\begin{figure} 
\centering 
\includegraphics[width=7cm]{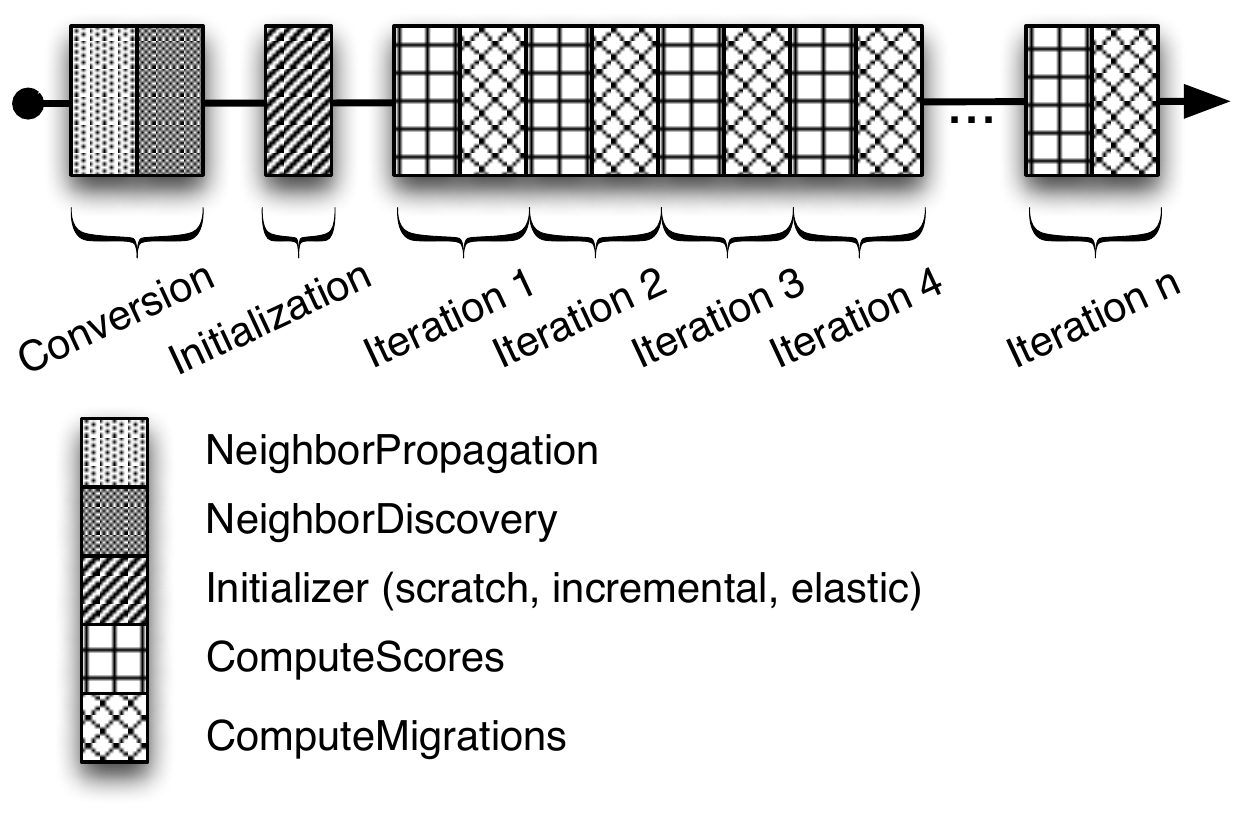}
\caption{Organization of the algorithm in multiple phases, each implemented by
one or multiple steps (block). Each algorithm step is implemented as a Pregel
superstep.}
\label{fig:phases} 
\vspace{-10px}
\end{figure}

At a high-level, the algorithm is organized in three phases, depicted in
Figure \ref{fig:phases}.  In the first phase, since LPA assumes an undirected
graph, if directed, \sys converts it to a weighted undirected form  as
described in Section \ref{sec:LPA}.  In the second phase, \sys initializes the
partition of each vertex depending on whether it partitions the graph from
scratch or it is adapting an existing partitioning.

Subsequently, the third phase that implements the main LPA iterative migration
process starts. In this phase, \sys iteratively executes two different steps.
In the first step, each vertex computes the label that maximizes its local
score function. In the second step, vertices decide whether to migrate by
changing label or defer migration. At the end of each iteration \sys evaluates
the halting condition to decide whether to continue or stop computation. 

We implement each of these phases as a series of Giraph supersteps. In the
following subsections, we describe each phase in detail.

\subsubsection{Graph conversion and initialization}

We implement the first phase, the graph conversion, as two Giraph supersteps.
Note that the Giraph data model is a distributed directed graph, where every
vertex is aware of its outgoing edges but not of the incoming ones. For this
reason, in the first superstep, each vertex sends its ID as a message to its
neighbors. We call this step \emph{NeighborPropagation}.

During the second superstep, a vertex receives a message from every other
vertex that has an edge to it.  For each received message, a vertex checks
whether an outgoing edge towards the other endpoint is already present. If this
is the case, the vertex sets the weight of the associated edge to 2.
Otherwise, it creates an edge pointing to the other vertex with a weight of 1.
We call this step \emph{NeighborDiscovery}.  

After this, \sys executes the second phase, assigning partitions to each
vertex. We call this step \emph{Initialization}, and it corresponds to a single
Giraph superstep. In the case \sys partitions the graph from scratch, each
vertex chooses a random partition.  We will consider the case of adapting a
partitioning in following sections. At this point, the LPA computation starts
on the undirected graph.

\subsubsection{Local computation of labels}

The vertex-centric programming model of Pregel lends itself to the
implementation of LPA.  During an LPA iteration, each vertex computes the label
that maximizes its local score based on the load of each partition and the
labels of its neighbors. Each vertex stores the label of a neighbor in the
value of the edge that connects them. When a vertex changes label, it informs
its neighbors of the new label through a message. Upon reception of the
message, neighboring vertices update the corresponding edge value with the new
label.

We implement a single LPA iteration as two successive Giraph supersteps that we
call \emph{ComputeScores} and \emph{ComputeMigrations}. In the first step, the
\emph{ComputeScores}, a vertex finds the label that maximizes its score.  In
more detail, during this step each vertex performs the following operations:
(i) it iterates over the messages, if any, and updates the edge values with the
new partitions of each neighbors, (ii) it iterates over its edges and computes
the frequency of each label across its neighborhood, (iii) it computes the
label the maximizes its local score, (iv) if a label with a higher score than
the current one is found, the vertex is flagged as \emph{candidate} to change
label in the next step.

%\begin{algorithm}[ComputeScores]
% \begin{algorithmic}[1]
% \If {$i\geq maxval$}
%     \State $i\gets 0$
% \Else
%     \If {$i+k\leq maxval$}
%         \State $i\gets i+k$
%     \EndIf
% \EndIf
% \end{algorithmic}
% \caption{ComputeScores}
% \label{alg:scores}
% \end{algorithm}

\subsubsection{Decentralized migration decisions}
\label{sec:migrations}

Implementing our algorithm in a synchronous model introduces additional
complexity. If we let every candidate vertex change label to maximize its local
score during the \emph{ComputeScores} step, we could obtain a partition that is
unbalanced and violates the maximum capacities restriction. Intuitively,
imagine that at a given time a partition was less loaded than the others. This
partition could be potentially attractive to many vertices as the penalty
function would favor that partition. As each vertex computes its score
independently based on the partition loads computed \emph{at the beginning} of
the iteration, many vertices could choose independently that same partition
label. To avoid this condition, we introduce an additional step that we call
\emph{Compute Migrations}, that serves the purpose of maintaining balance.  

To avoid exceeding partition capacities, vertices would need to coordinate after
they have decided the label that maximizes their score during the
\emph{ComputeScores} step. However, as we aim for a decentralized and
lightweight solution, we opt for a probabilistic approach: a candidate vertex
changes to a label with a probability that depends (i) on the remaining capacity
of the corresponding partition and (ii) the total number of vertices that are
candidates to change to the specific label.

More specifically, suppose that at iteration $i$ partition $l$ has a remaining
capacity $r(l)$ such that
 \begin{align}
 r(l) = C - b(l)
 \end{align}
Suppose that $M(l)$ is the set of candidate vertices that want to change to
label $l$, which is determined during the \emph{ComputeScores} step of the
iteration. We define the load of $M(l)$ as
\begin{align}
\label{form:migration}
m(l) = \sum_{v\in M(l)} deg(v)
\end{align} 
This is the total load in edges that vertices in $M(l)$ would carry to
partition $l$ if they all migrated.  Since $m(l)$ might be higher than the
remaining capacity, in the second step of the iteration, we execute each
candidate vertex that wants to change label, and we only let it change with a
probability $p$ such that
\begin{align}
\label{form:migration}
p = \frac{r(l)}{m(l)}
\end{align} 
Upon change, each vertex updates the capacities of the current partition and the
new partition, and it updates the global score through the associated counter.
It also sends a message to all its neighbors, with its new label. At this point,
after all vertices have changed label, the halting condition can be
evaluated based on the global score. 

This strategy has the advantage of requiring neither centralized nor direct
coordination between vertices. Vertices can independently decide to change label
or retry in the next iteration based on local information. Moreover, it is
lightweight, as probabilities can be computed on each worker at the beginning of
the step. Because this is a probabilistic approach, it is possible that the load
of a partition exceeds the remaining capacity. Nevertheless, the probability is
bounded and decreases exponentially with the number of migrating vertices and
super-exponentially with the inverse of the maximum degree.
%
%In the Appendix we present a tail inequality that shows that with high probability, Spinner does not violate partition capacities.
%
\begin{proposition}
	The probability that at iteration $i+1$ the load $b_{i+1}(l)$ exceeds the capacity by a factor of $\epsilon\, r_i(l)$ is 
\begin{align}
	\mathbf{Pr}(b_{i+1}(l) \geq C + \epsilon \, r_i(l)) &\leq e^{- 2\, |M(l)|\, \Phi(\epsilon)}, 	
\end{align}
where $\Phi(\epsilon) = \left(\frac{\epsilon \, r_i(l)}{\Delta - \delta}\right)^{\hspace*{-1mm}2}$ and $\delta$, $\Delta$ is the minimum and maximum degree of the vertices in $M(l)$, respectively. 
\label{prop:tail_inequality}
\end{proposition}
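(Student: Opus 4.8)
The plan is to recognize the load $b_{i+1}(l)$ as a sum of bounded independent random variables and apply Hoeffding's inequality. Each candidate vertex $v \in M(l)$ independently migrates to $l$ with probability $p = r_i(l)/m(l)$, contributing $deg(v)$ edges to the partition upon migration. So I would define, for each $v \in M(l)$, the random variable $Y_v = deg(v)$ if $v$ migrates and $Y_v = 0$ otherwise; these are independent (decisions are made independently per vertex on local information), and each $Y_v$ takes values in $[0, deg(v)] \subseteq [\delta, \Delta]$ — more precisely in $\{0, deg(v)\}$, but the interval $[0,\Delta]$ is the clean bound, though one can sharpen the width to $\Delta$ if one only records the support size, not the $[\delta,\Delta]$ gap. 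The load after migration is $b_{i+1}(l) = b_i(l) + \sum_{v \in M(l)} Y_v$. The expected added load is $\sum_v p\, deg(v) = p\, m(l) = r_i(l)$, so $\mathbf{E}[b_{i+1}(l)] = b_i(l) + r_i(l) = C$ — the process is calibrated so that in expectation the partition lands exactly at capacity.

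Next I would write the overflow event in terms of the centered sum: $b_{i+1}(l) \geq C + \epsilon\, r_i(l)$ is exactly $\sum_{v\in M(l)} (Y_v - \mathbf{E}Y_v) \geq \epsilon\, r_i(l)$. Applying the one-sided Hoeffding bound to independent variables with ranges of width $w_v$ (here $w_v \le \Delta - \delta$, or $\Delta$ depending on the chosen bound) gives
\begin{align}
\mathbf{Pr}\Big(\textstyle\sum_{v} (Y_v - \mathbf{E}Y_v) \geq \epsilon\, r_i(l)\Big) \leq \exp\!\left(-\frac{2\,(\epsilon\, r_i(l))^2}{\sum_{v\in M(l)} w_v^2}\right).
\end{align}
Upper-bounding each $w_v$ by $\Delta - \delta$ makes the denominator $|M(l)|\,(\Delta-\delta)^2$, and the exponent becomes $-2\,|M(l)|\,\big(\tfrac{\epsilon\, r_i(l)}{\Delta - \delta}\big)^2 = -2\,|M(l)|\,\Phi(\epsilon)$, which is the claimed bound. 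The super-exponential-in-$1/\Delta$ remark in the text is then just the observation that $\Phi(\epsilon)$ scales like $1/(\Delta-\delta)^2$.

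The only real subtlety — and the step I would be most careful about — is justifying that the $Y_v$ are genuinely independent. In the Giraph implementation each candidate vertex flips its own coin using only partition-load aggregates computed at the start of the iteration, so conditioned on the state at the beginning of iteration $i$ (in particular on the set $M(l)$, on all degrees, and on $r_i(l), m(l)$) the migration indicators are mutually independent Bernoulli$(p)$ variables. The statement should therefore be read as a conditional bound given the iteration-$i$ state; I would state that conditioning explicitly and note that $r_i(l)$, $m(l)$, $|M(l)|$, $\delta$, $\Delta$ are all measurable with respect to it. A secondary minor point is the range bound: using $[\delta,\Delta]$ rather than $[0,\Delta]$ for the support of $Y_v$ is slightly informal since $Y_v$ can be $0$; I would either absorb this (the bound with width $\Delta-\delta$ still holds because $Y_v\in\{0,deg(v)\}$ and $deg(v)-0 = deg(v) \le \Delta$, so actually width $\le \Delta$ suffices and $\Delta-\delta$ is a valid but not tight choice), or simply remark that $\Phi(\epsilon)$ with denominator $\Delta-\delta$ is what the authors quote and it follows a fortiori from width $\le\Delta$. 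Everything else is a direct substitution.
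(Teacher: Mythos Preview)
Your proposal is essentially identical to the paper's proof: define the per-vertex migration load variables, compute their expected sum as $r_i(l)$, and apply Hoeffding's inequality with per-variable range bounded by $\Delta-\delta$ to get the stated exponent. You are in fact more careful than the paper on two points---you make the conditional nature of the independence explicit, and you flag the oddity of using width $\Delta-\delta$ when $Y_v\in\{0,deg(v)\}$ (the paper uses $(\Delta-\delta)^2$ in the Hoeffding denominator without comment); note, however, that your ``a fortiori'' remark is backwards, since replacing $\Delta$ by the smaller $\Delta-\delta$ in the denominator yields a \emph{stronger} bound, not a weaker one, so that step is not automatically justified in either proof.
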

Due to lack of space, we refer to \cite{extended} for the derivation of the proof.

We can conclude that with high probability at each iteration \sys does not
violate the partition capacity. To give an example, consider that $|M(l)| = 200$
vertices with minimum and maximum degree $\delta =1$ and $\Delta = 500$,
respectively, want to migrate to partition $l$. The probability that, after the
migration, the load $b_{i+1}(l)$ exceeds $1.2 \, r_i(l) + b_i(l) = C + 0.2\,
r_i(l)$ is smaller than 0.2, where the probability that it exceeds $C + 0.4\,
r_i(l)$ is smaller than 0.0016. Note that, being a upper bound, this is a
pessimistic estimate. In Section \ref{sec:eval:balance} we show experimentally
that unbalance is in practice much lower.

%\begin{algorithm}[ComputeMigrations]
% \begin{algorithmic}[1]
% \If {$i\geq maxval$}
%     \State $i\gets 0$
% \Else
%     \If {$i+k\leq maxval$}
%         \State $i\gets i+k$
%     \EndIf
% \EndIf
% \end{algorithmic}
% \caption{ComputeMigrations}
% \label{alg:migrations}
%\end{algorithm}

\subsubsection{Asynchronous per-worker computation}
\label{sec:async}

Although the introduction of the \emph{ComputeMigrations} step helps maintain
balance by preventing excessive vertices from acquiring the same label, it
depends on synchronous updates of the partition loads.  The probabilistic
migration decision described in \ref{sec:migrations} is based on the partition
loads calculated during the previous superstep, and ignores any migrations
decision performed during the currently executed superstep. Consequently, a less
loaded partition will be attractive to many vertices, but only a few of them
will be allowed to migrate to it, delaying the migration decision of the
remaining ones until the next supersteps.

In general, the order in which vertices update their label impacts convergence
speed. While asynchronous graph processing systems allow more flexibility in
scheduling of vertex updates, the synchronous nature of the Pregel model does
not allow any control on the order of vertex computation.

However, \sys leverages features of the Giraph API to emulate an asynchronous
computation without the need of a purely asynchronous model. Specifically, \sys
treats each iteration computation within the same physical machine worker of a
cluster as an asynchronous computation.  During an iteration, each worker
maintains local values for the partition loads that are shared across all
vertices in the same worker. When a vertex is evaluated in the
\emph{ComputeScores} step and it becomes a candidate to change to a label, it
updates the local values of the load of the corresponding partition
asynchronously. As an effect, subsequent vertex computations in the same
iteration and on the same worker use more up-to-date partition load
information.  Note that every vertex still has to be evaluated in the
\emph{ComputeMigrations} step for consistency among workers.  

%As an effect, less loaded partitions will be more attractive only for the first
%vertices being evaluated.  As vertices locally decide to acquire the label of
%these partitions, and the local load values are updated accordingly, the
%attractiveness of these partitions diminishes. 

This approach overall speeds up convergence, and does not hurt the scalability
properties of the Pregel model. In fact, the \sys implementation leverages a
feature supported by the Giraph programming interface that allows data sharing
and computations on a per-worker basis.  The information shared within the same
worker is a set of counters for each partition and therefore does not add to
the memory overhead. Furthermore, it still does not require any coordination
across workers. 

\subsubsection{Management of partition loads and counters}
\label{sec:aggregators}

Spinner relies on a number of counters to execute the partitioning: the global
score, the partition loads $b(l)$, and the migration counters $m(l)$. The
Pregel model supports global computation of commutative and associative
operations through \emph{aggregators}. During each superstep, vertices can
aggregate values into named aggregators, and they can access the value
aggregated during the previous superstep. In Pregel, each aggregator is
computed in parallel by each worker for the aggregations performed by the
assigned vertices, and a master worker aggregates these values at the end of
the superstep. Giraph implements sharded aggregators, where the duty of the
master worker for each aggregator is delegated to a different worker. This
architecture allows for scalability, through a fair distribution of load and
parallel communication of partial aggregations. To exploit this feature, \sys
implements each counter through a different aggregator, making the management
of counters scalable.

\subsection{Incremental and elastic repartitioning}

To support incremental and elastic repartitioning, \sys restarts the
computation from the previous state of the partitioning. Regarding the
implementation, the difference lies in the execution of the second phase of the
algorithm, when vertices are labeled.  In the case a graph is repartitioned due
to changes to the graph, the vertices that have already been partitioned are
loaded and labeled as previously. Any new vertices are labeled randomly. In the
case a graph is repartitioned due to changes to the number of partitions, the
vertices are loaded and labeled as previously, and they are re-labeled to a new
partition only according to the approach described in Section
\ref{sec:elastic}.

\section{Evaluation}\label{sec:eval}
\begin{table*}[t!]
\centering
\begin{tabular}{@{}lcllcllcllcllcll@{}}\toprule
& \phantom {} & \multicolumn{2}{c}{\textbf{Twitter k=2}} & \phantom{} & \multicolumn{2}{c}{\textbf{Twitter k=4}}  & \phantom{} & \multicolumn{2}{c}{\textbf{Twitter k=8}} & \phantom{} & \multicolumn{2}{c}{\textbf{Twitter k=16}} & \phantom{} & \multicolumn{2}{c}{\textbf{Twitter k=32}} \\
\cmidrule{3-4} \cmidrule{6-7} \cmidrule{9-10} \cmidrule{12-13} \cmidrule{15-16} 
Approach && $\phi$ & $\rho$ && $\phi$ & $\rho$ && $\phi$ & $\rho$  && $\phi$ & $\rho$ && $\phi$ & $\rho$ \\ \midrule
%Ugander et al. \cite{Ugander2013} && N/A & N/A && N/A & N/A && N/A & N/A && N/A & N/A && N/A & N/A && $0.52$ & N/A \\
Wang et al. \cite{Wang2013} && $0.61$ & $1.30$ && $0.36$ & $1.63$ && $0.23$ & $2.19$ && $0.15$ & $2.63$ && $0.11$ & $1.87$ \\
Stanton et al. \cite{Stanton2012} && $0.66$ & $1.04$ && $0.45$ & $1.07$ && $0.34$ & $1.10$ && $0.24$ & $1.13$ && $0.20$ & $1.15$ \\
Fennel \cite{Tsourakakis2014} && $0.93$ & $1.10$ && $0.71$ & $1.10$ && $0.52$ & $1.10$ && $0.41$ & $1.10$ && $0.33$ & $1.10$  \\
Metis \cite{Karypis1995} && $0.88$ & $1.02$ && $0.76$ & $1.03$ && $0.64$ & $1.03$ && $0.46$ & $1.03$ && $0.37$ & $1.03$ \\
\textbf{\sys} && $0.85$ & $1.05$ && $0.69$ & $1.02$ && $0.51$ & $1.05$ && $0.39$ & $1.04$ && $0.31$ & $1.04$ \\
\bottomrule
\end{tabular}
\caption{Comparison with state-of-the-art approaches. \sys outperforms or compares to the stream-based approaches, and is only slightly outperformed by sequential Metis. Notice that because Wang et al. balances on the number of vertices, not edges, it produces partitionings with high values of $\rho$.}
\label{tab:comparison}
\vspace{0px}
\end{table*}
In this section, we assess \sys's ability to produce good partitions on large
graphs. Specifically, we evaluate the partitioning quality in terms of locality
and balance and use \sys to partition billion-vertex graphs. Furthermore, we
evaluate \sys's ability to support frequent adaptation in dynamic cloud
environments. Here, we measure the efficiency of \sys in adapting to changes in
the underlying graph and compute resources. Furthermore, we utilize the
partitioning computed by \sys with the Giraph graph processing engine and
measure the impact on the performance of real analytical applications.

For our experiments, we use a variety of synthetic as well as real-world
large-scale graphs.  Table \ref{tab:datasets} summarizes the real datasets we
used.  We run our evaluation on different Hadoop clusters. We describe the
details of each setup in the following sections.

%In additional to these datasets, we generated synthetic graphs of different sizes
%to evaluate the scalability of our approach. 
%To generate graphs with similar characteristics to those used for the
%evaluation (so called ``small-world'' graphs) we used the Watts-Strogatz model
%\cite{Watts-Colective-1998}, all with \emph{beta} parameter $0.3$.  
%
\begin{table}
	\centering
	\begin{tabular}{l r r r r}
		\textbf{Name} & \textbf{|V|} & \textbf{|E|} & \textbf{Directed} & \textbf{Source} \\
		\hline
		LiveJournal (LJ) & 4.8M & 69M & Yes & \cite{Backstrom2006} \\
		Tuenti (TU) & 12M & 685M  & No & \cite{tuenti} \\
		Google+ (G+) & 29M & 462M & Yes & \cite{Gong2012} \\
		Twitter (TW) & 40M & 1.5B & Yes & \cite{Kwak2010} \\
		Friendster (FR) & 66M & 1.8B & No & \cite{Yang2012a} \\
		Yahoo! (Y!) & 1.4B & 6.6B & Yes & \cite{webscope}
	\end{tabular}
	\caption{Description of the real-world datasets used for the evaluation.}
	\label{tab:datasets}
	\vspace{-10px}
\end{table}
\subsection{Partitioning quality}
\label{sec:quality}

\begin{figure*}[t]
	\centering
	%\hspace{-10px}
	\subfigure[]{
	        \includegraphics[width=0.47\linewidth]{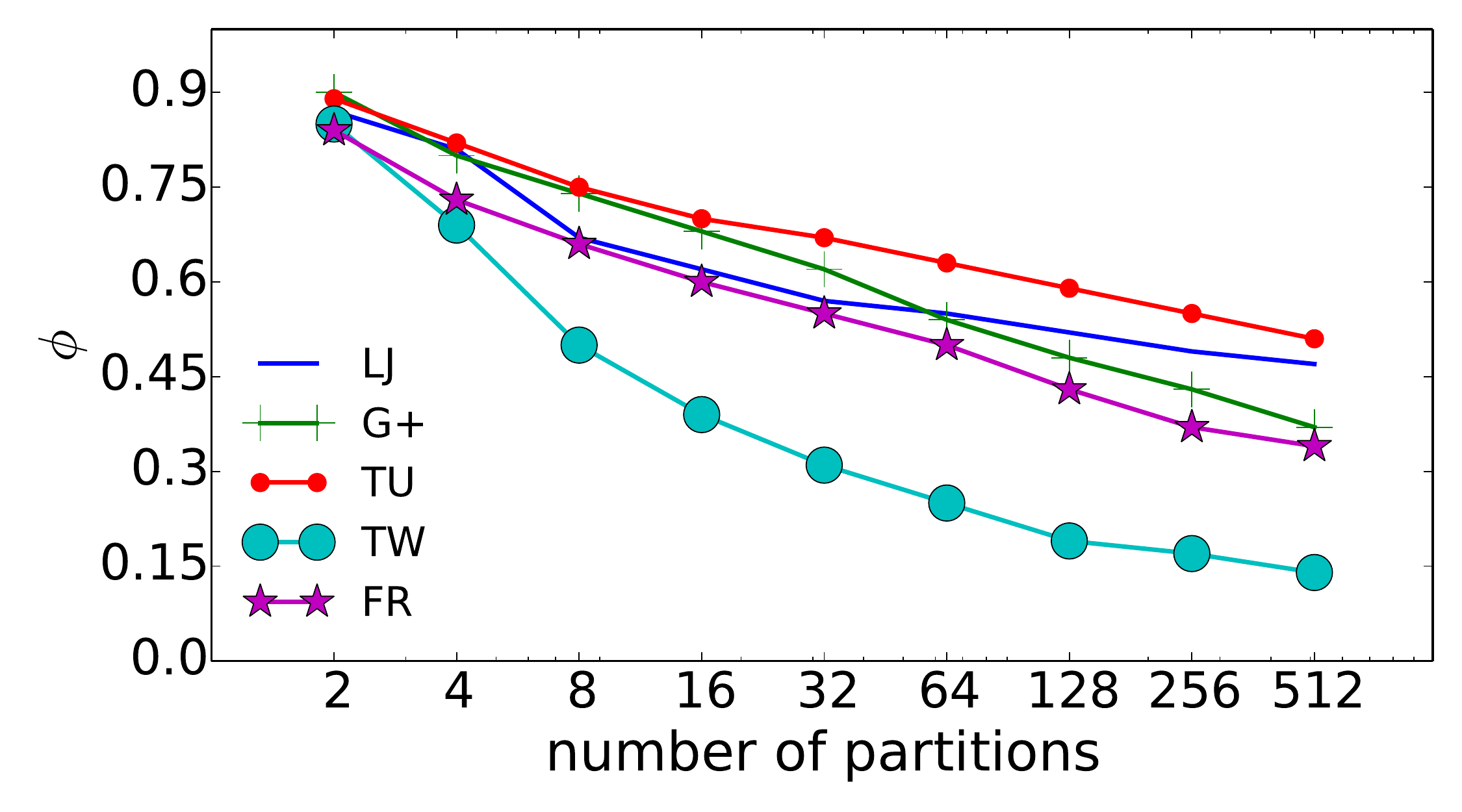} 
		\label{fig:quality}
	}
	%\hspace{30px}
	\subfigure[]{
	        \includegraphics[width=0.47\linewidth]{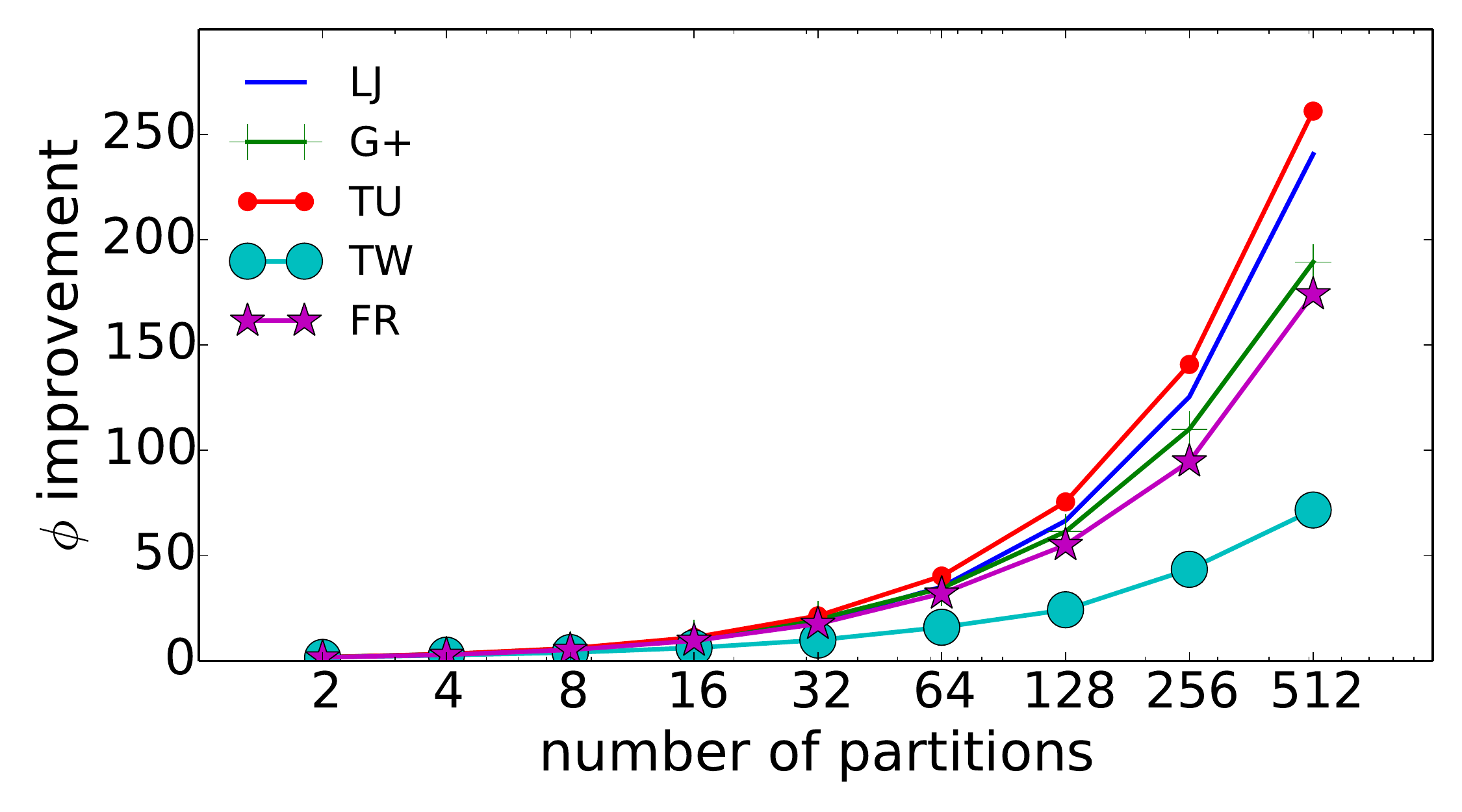}
		\label{fig:relative_perf}
	}
	\vspace{-10px}
	\caption{(a) Partitioning locality on real graphs as a function of the
	number of partitions. X-axis is in log scale. (b) Improvement in the
	locality compared to hash partitioning. X-axis is in log scale.} 
	\label{fig:partqual}
	%\vspace{-10px}
\end{figure*}

In our first set of experiments, we measure the quality of the partitions
that \sys can compute in terms of locality and balance.  We measure locality with the
\emph{ratio of local edges} $\phi$ and balance with the \emph{maximum
normalized load} $\rho $, defined as
\begin{align}
\phi= \frac{\#~local~edges}{|E|},\;
\rho = \frac{maximum~load}{\frac{|E|}{k}}
\end{align}
where $k$ is the number of partitions, $\#~local~edges$ represents the number
of edges that connect two vertices assigned to the same partition, and
$maximum~load$ represents the number of edges assigned to the most loaded
partition. The maximum normalized load metric is typically used to measure
unbalance and represents the percentage-wise difference of the most loaded
partition from a perfectly balanced partition.

First, we observe that the ability to maintain local edges depends on the number
of partitions.  Intuitively, the more partitions, the harder it is to maintain
locality.  In this experiment, we vary the number of partitions and measure
locality and balance for different graphs. For this and the remaining
experiments, we set the algorithm parameters as follows: additional capacity $c
= 1.05$, and halting thresholds $\epsilon = 0.001$ and $w= 5$.  We run this
experiment on a Hadoop cluster of 92 machines with 64GB RAM and 8 compute cores
each.

%\begin{figure}[t]
%  \centering
%    \includegraphics[width=0.85\linewidth]{figs/partitioning.pdf} 
%    \caption{Quality of partitioning of the real-world graphs as a function of
%    the number of partitions. X axis is drawn in log scale. The experiments
%    where run with weight $\lambda = 1.0$, additional capacity $c = 0.05$, and
%    halting threshold $\epsilon = 0.001$.}
%  \label{fig:quality}
%\end{figure}

In Figure \ref{fig:quality}, we show that \sys is able to produce partitions
with high locality for all the graphs also for large numbers of partitions. With
respect to balance, \sys calculates fairly balanced partitions. In
Table~\ref{tab:balance} we show the average value of the maximum normalized load
for each graph. For example, a $\rho$ value of 1.059 for the Twitter graphs
means that no partition exceeds the ideal size by more than 5.9\% edges.

\begin{table}
	\centering
	\begin{tabular}{l  r r r r r}
		\toprule
		Graph & LJ & G+ & TU & TW & FR \\
		\hline
		$\rho$ & 1.053 & 1.042 & 1.052  & 1.059 & 1.047 \\
		\bottomrule
	\end{tabular}
	\caption{Partitioning balance. The table shows the average $\rho$ for
	the different graphs.} 
	\label{tab:balance}
	\vspace{-13px}
\end{table}

%\begin{figure}[t]
%  \centering
%    \includegraphics[width=0.85\linewidth]{figs/partitioning-relative.pdf}
%    \caption{Partitioning of the Twitter graph across 256 partitions. The
%    figure shows the evolution of metrics $\phi$, $\rho$, and $score(G)$ at
%    each iteration.} 
%    \label{fig:evolution}
%\end{figure}

To give perspective on the quality of the partitions that \sys computes,
Figure~\ref{fig:relative_perf} shows the improvement in the percentage of local
edges compared to hash partitioning. We perform this comparison for the same set
of graphs. Notice that for 512 partitions \sys increases locality by up to 250
times.

In Table \ref{tab:comparison}, we compare \sys with state-of-the-art
approaches. Recall that our primary goal for \sys is to design a scalable
algorithm for the Pregel model that is practical in maintaining the resulting
partitioning, and that is \emph{comparable} to the state-of-the-art in terms of
locality and balance.  Indeed, \sys computes partitions with locality that is
within 2-12\% of the best approach, typically Metis, and balance that is within
1-3\% of the best approach. In cases \sys performs slightly worse than Fennel
with respect to $\phi$, it performs better with respect to $\rho$.  These two
metrics are connected as the most loaded partition will be the result of
migrations to increase locality.

To describe in more detail the behavior of our algorithm, in Figure
\ref{fig:evolution} we show the evolution of the different metrics during the
partitioning of the Twitter (left) and the Yahoo! (right) graphs. The Twitter
graph is known for the existence of high-degree hubs \cite{Gonzalez2012}. Due to
these hubs, the Twitter graph can produce highly unbalanced partitions when
partitioned with random partitioning. We will show in Section
\ref{sec:application_performance} how unbalance affects application performance,
and in particular load balancing.

As Figure~\ref{fig:evolution-twitter} shows, the initial maximum normalized
load obtained with random partitioning is high ($\rho = 1.67$), as expected.
However, by applying our approach, the partitioning is quickly balanced ($\rho
= 1.05$), while the ratio of local edges increases steadily. Looking at the
shape of the $score(G)$ curve, notice that initially the global score is
boosted precisely by the increased balance, while after balance is reached,
around iteration 20, it increases following the trend of $\phi$.  Note that we
let the algorithm run for 115 iterations ignoring the halting condition, which
otherwise would have halted the partitioning at iteration 41, as shown by the
vertical black line.

In the case of the Yahoo! graph, the partitioning starts out more balanced than
Twitter. The most notable difference is that the algorithm converges to a good
partitioning with 73\% local edges and a $\rho$ of 1.10 after only 42
iterations. A single iteration on the 1.4B-vertex Yahoo! graph takes on average
200 seconds on an AWS Hadoop cluster consisting of 115 m2.4xlarge instances,
totaling 140 minutes for the entire execution.

\begin{figure*}[t!]
	\centering
		\subfigure[Partitioning of the Twitter graph.]{
			\includegraphics[width=0.47\linewidth]{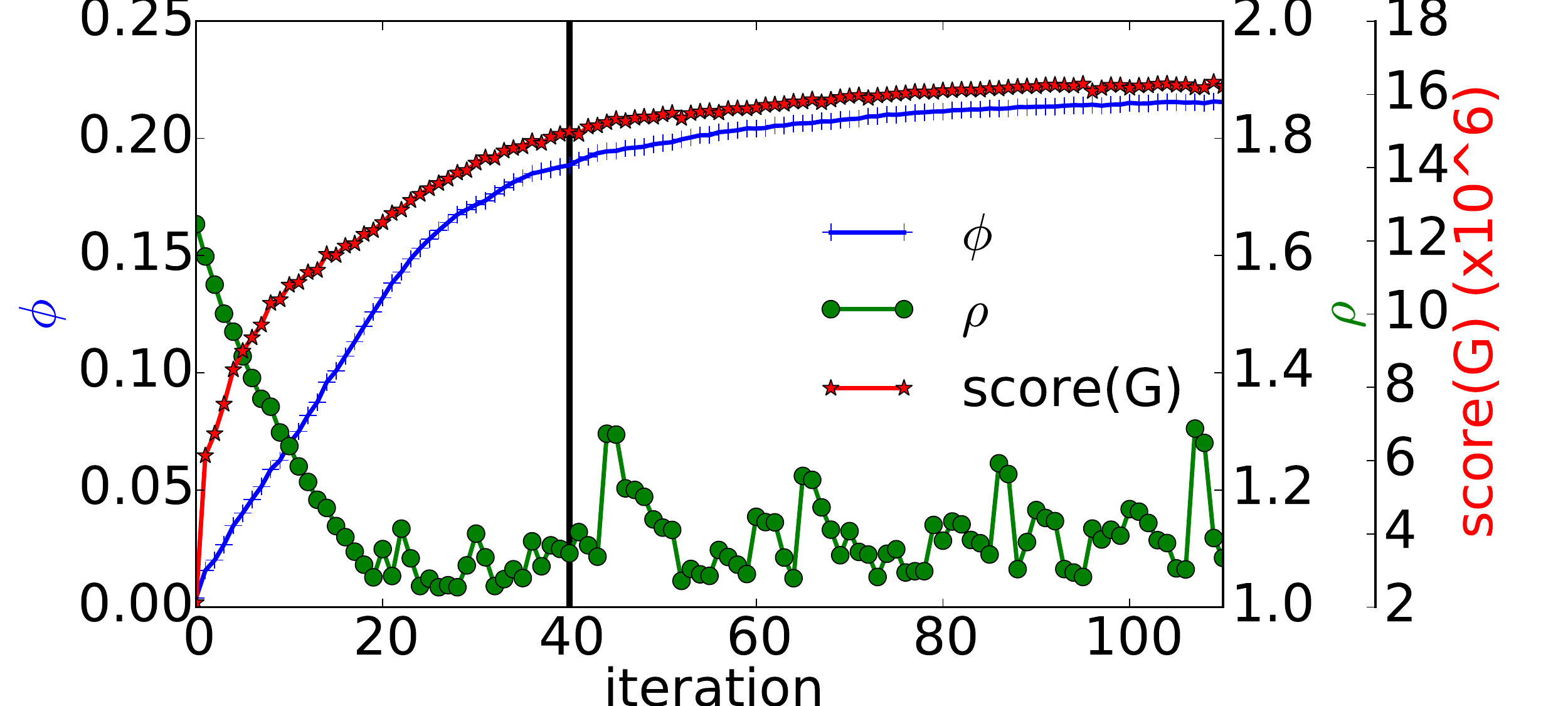}
			\label{fig:evolution-twitter}
		}
		%\hspace{20px}
		\subfigure[Partitioning of the Yahoo! graph.]{
		 \includegraphics[width=0.47\linewidth]{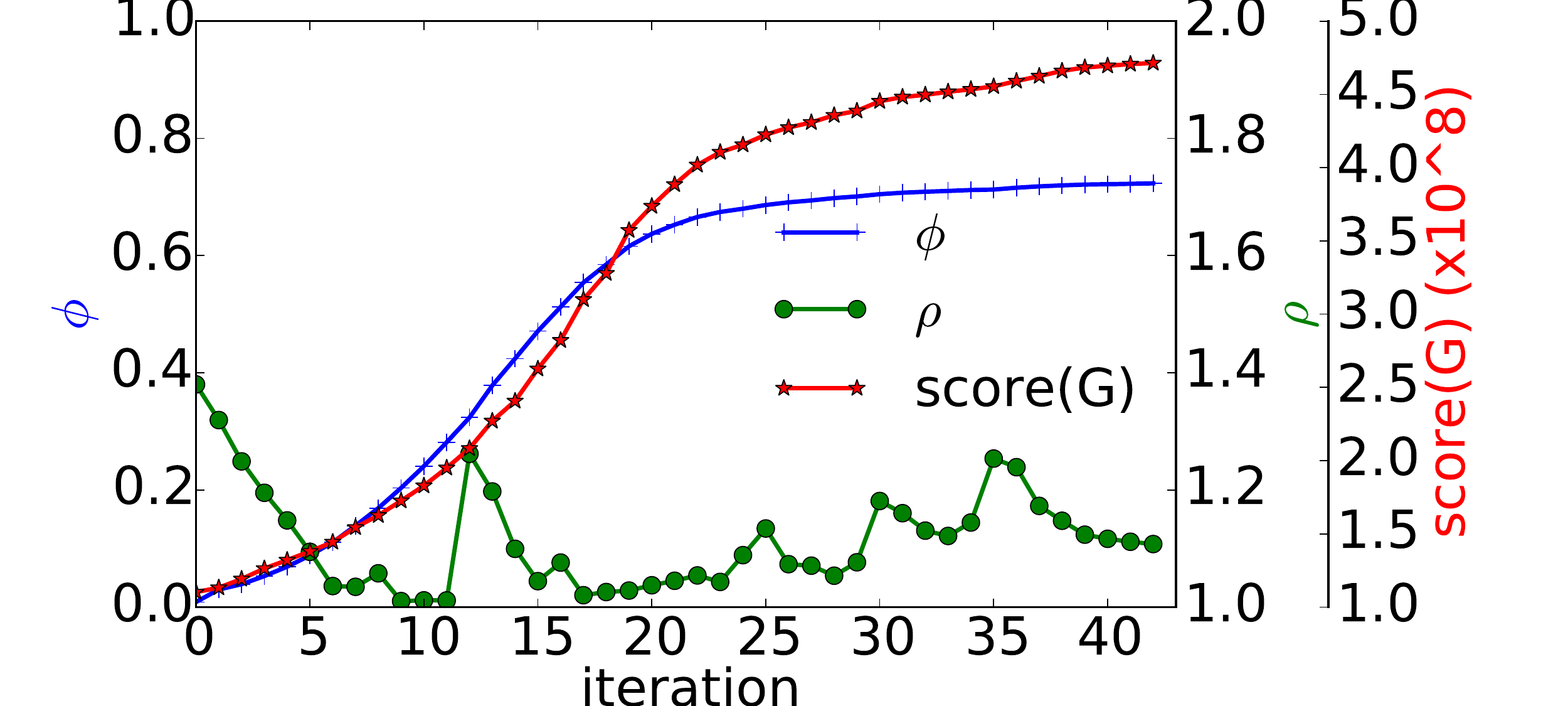}
		 \label{fig:evolution-yahoo}
	   }
	   	\vspace{-10px}
	\caption{Partitioning of (a) the Twitter graph across 256 partitions
	and (b) the Yahoo! web graph across 115 partitions. The figure shows the
	evolution of metrics $\phi$, $\rho$, and $score(G)$ across iterations.
	} 
	\label{fig:evolution}
	\vspace{-10px}
\end{figure*}

%\begin{figure}[t]
%  \centering
%    \includegraphics[width=1.0\linewidth]{figs/evolution-partitioning.pdf}
%    \caption{Partitioning of the Twitter graph across 256 partitions. The
%    figure shows the evolution of metrics $\phi$, $\rho$, and $score(G)$ at
%    each iteration.} \label{fig:evolution}
%\end{figure}

\subsubsection{Impact of additional capacity on balance and convergence}
\label{sec:eval:balance}

\begin{figure}[t]
	\centering
	\subfigure[Balance]{
		\includegraphics[width=0.46\linewidth]{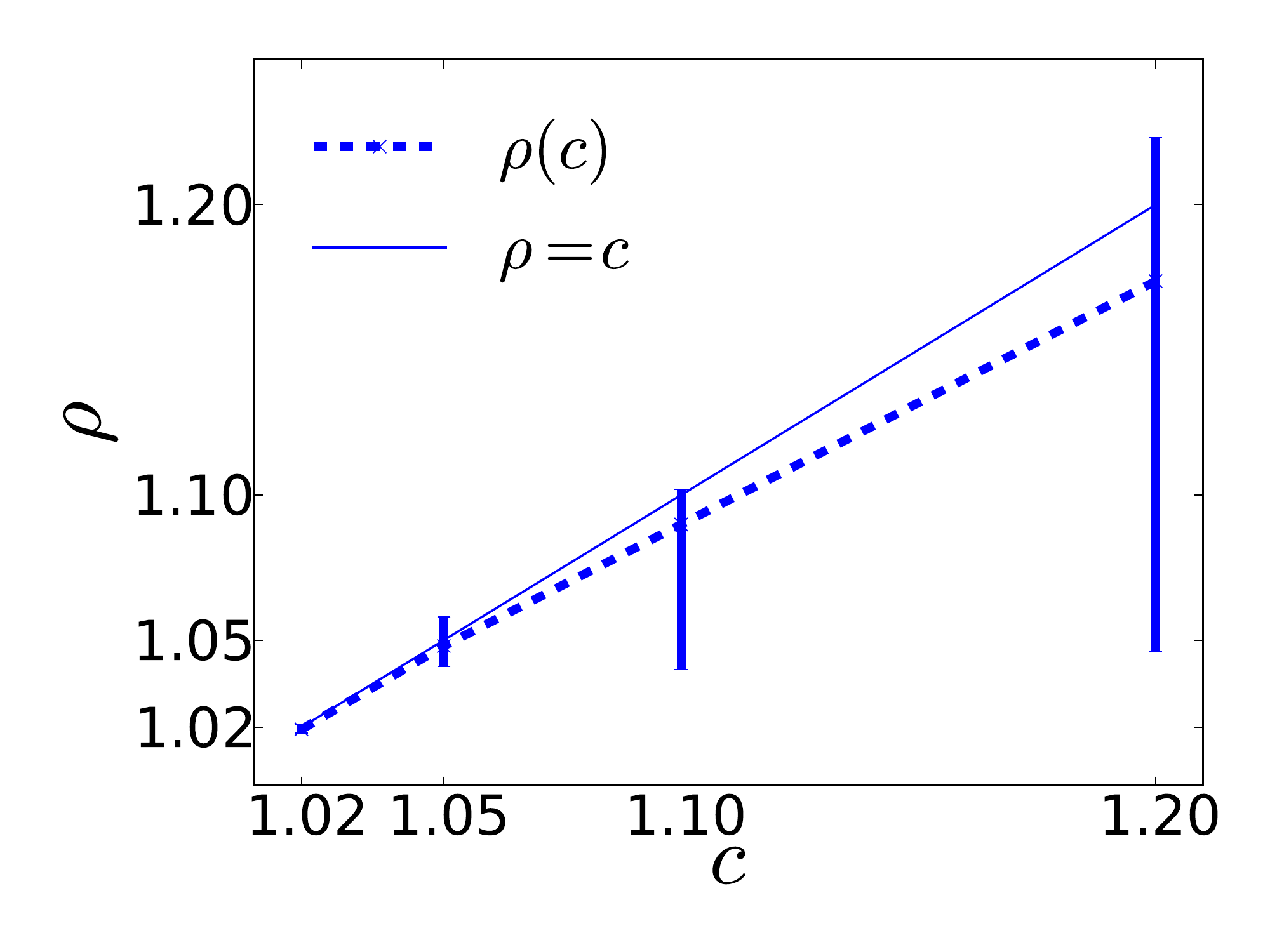}
		\label{fig:c_balance}
	}
	%\hspace{-10px}
	\subfigure[Convergence speed]{
		\includegraphics[width=0.46\linewidth]{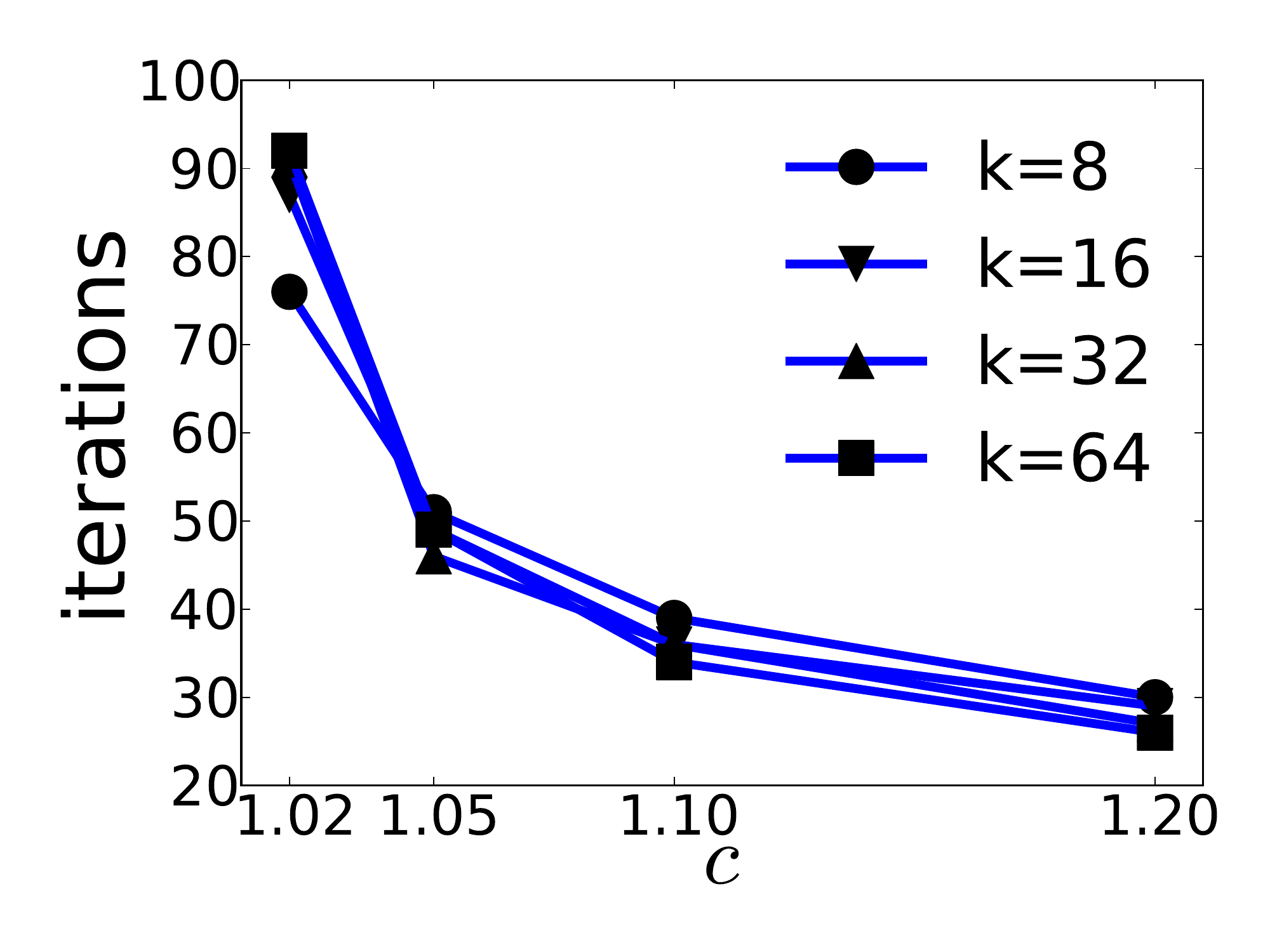}
		\label{fig:c_supersteps}
	}
    \vspace{-10px}
	\caption{Impact of $c$ on the partitioning. Figure (a) shows the relationship between $c$ and $\rho$. Figure (b) shows the relationship between $c$ and speed of convergence.} 
	\vspace{-14px}
\end{figure}

Here, we investigate the effect of parameter $c$ on balance and convergence
speed. Recall that \sys uses parameter $c$ to control the maximum unbalance.
Additionally, parameter $c$ affects convergence speed; larger values of $c$
should increase convergence speed as more migrations are allowed during each
iteration. 

In Section \ref{sec:migrations} we showed that with high probability \sys
respects partition capacities, that is, $maximum~load \leq C$. From the
definitions of $\rho = \frac{maximum~load}{\frac{|E|}{k}}$ and $C = c \cdot
\frac{|E|}{k}$, we derive that with high probability $\rho \leq c$. Therefore,
we can use parameter $c$ to bound the unbalance of partitioning. For instance,
if we allow partitions to store $20\%$ more edges than the ideal value, \sys
should produce a partitioning with a maximum normalized load of up to $1.2$. 

To investigate these hypotheses experimentally, we vary the value of $c$ and
measure the number of iterations needed to converge as well as the final value
of $\rho$. We partition the LiveJournal graph into 8, 16, 32, and 64
partitions, setting $c$ to $1.02$, $1.05$, $1.10$, and $1.20$. We repeat each
experiment 10 times and the average for each value of $c$. As expected, Figure
\ref{fig:c_balance} shows that indeed on average $\rho \leq c$. Moreover, the
error bars show the minimum and maximum value of $\rho$ across the runs. We can
notice that in some cases $\rho$ is much smaller than $c$, and when it is
exceeded, it is exceeded only by a small degree. 

Figure \ref{fig:c_supersteps} shows the relationship between $c$ and the number
of iterations needed to converge. Indeed, a larger value of $c$ speeds up
convergence. These results show how $c$ can be used to control the maximum
normalized load of the partitioning. It is up to the user to decide the
trade-off between balance and speed of convergence.

\subsection{Scalability}

\begin{figure*}[t!]
	\centering
	\subfigure[Runtime vs. graph size]{
		\includegraphics[width=0.29\linewidth]{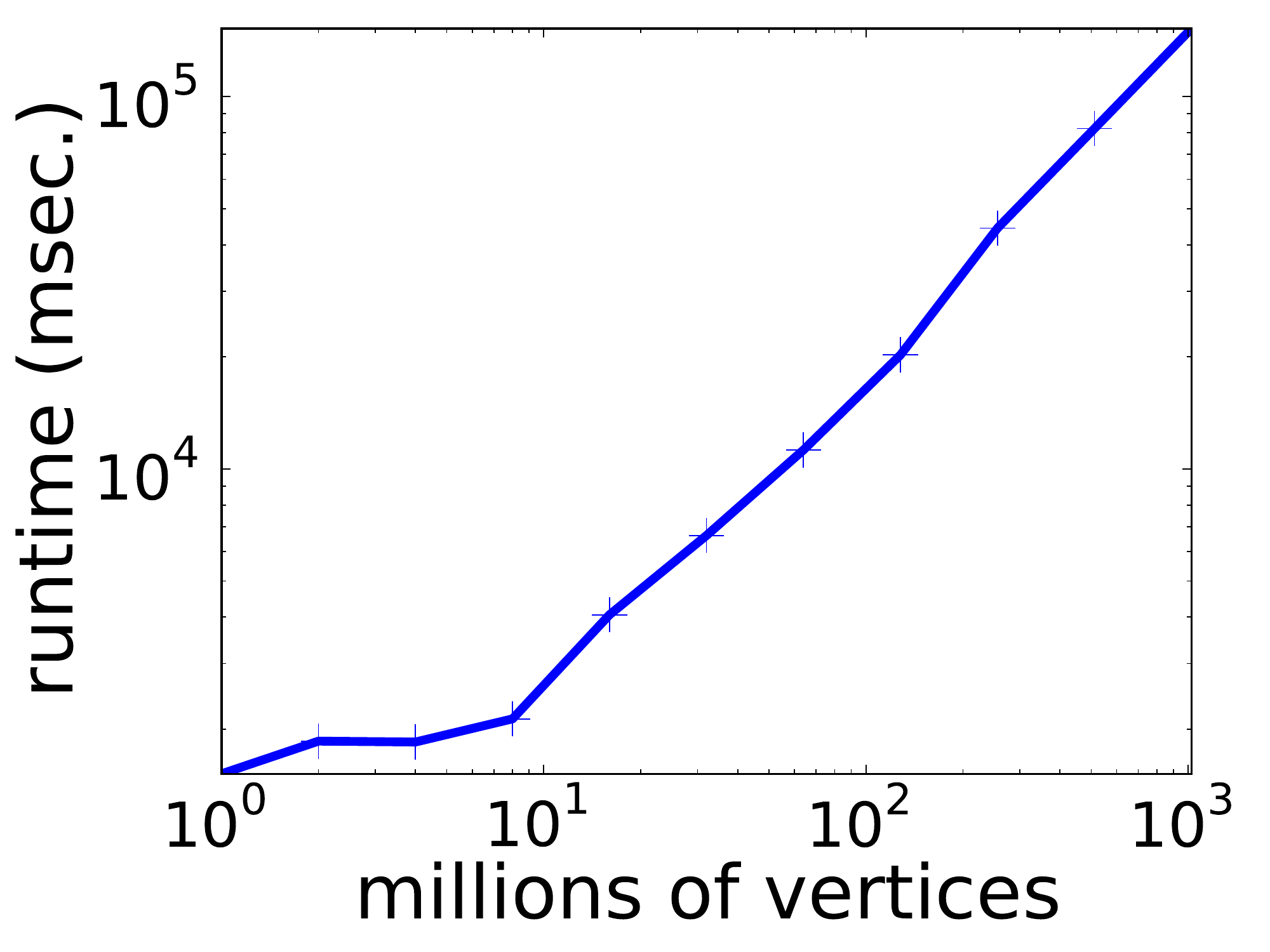}
		\label{fig:scalability_size}
	}
	\hspace{10px}
	\subfigure[Runtime vs. cluster size]{
		\includegraphics[width=0.29\linewidth]{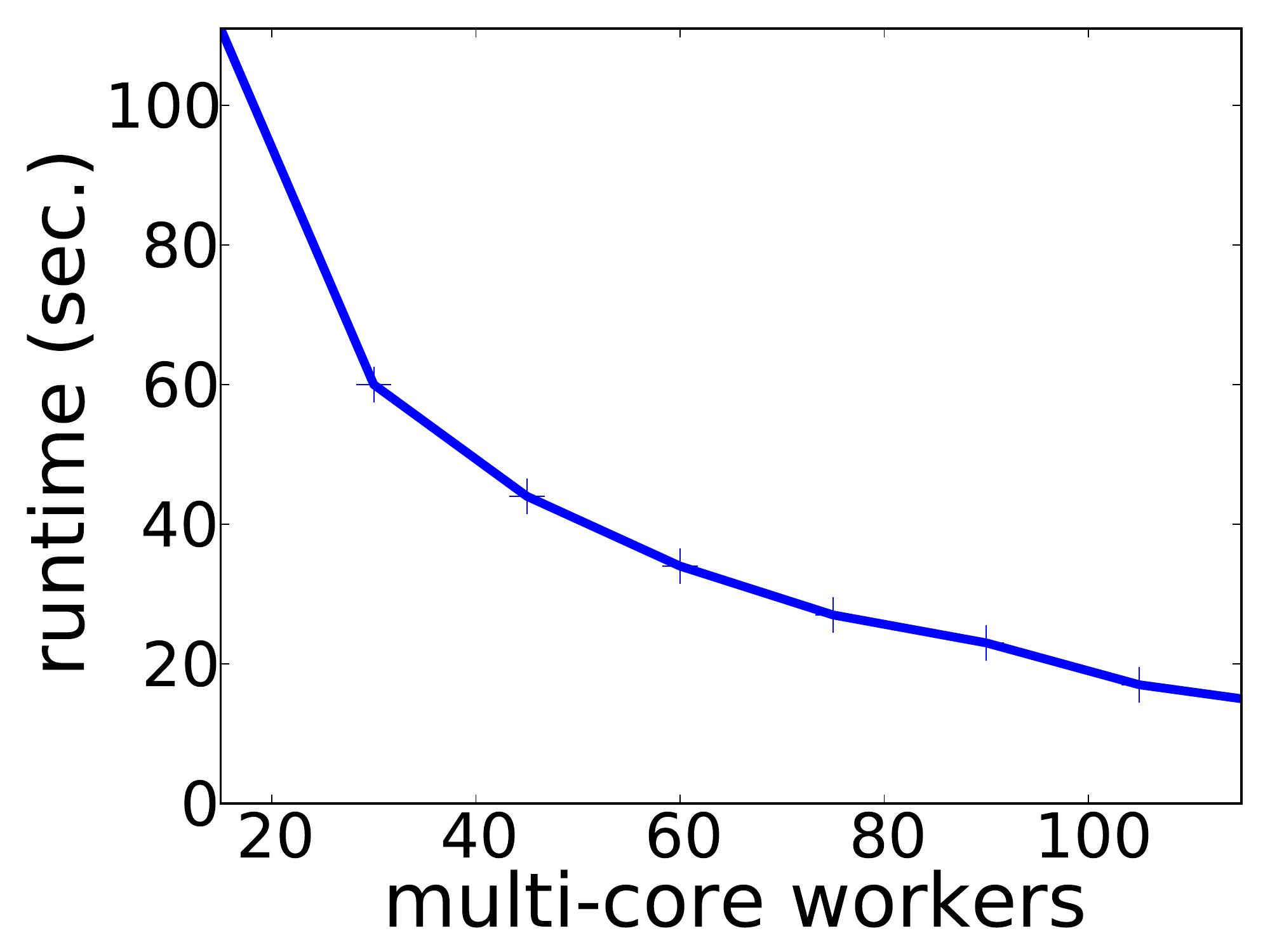}
		\label{fig:scalability_workers}
	}
	\hspace{10px}
	\subfigure[Runtime vs. $k$]{
		\includegraphics[width=0.29\linewidth]{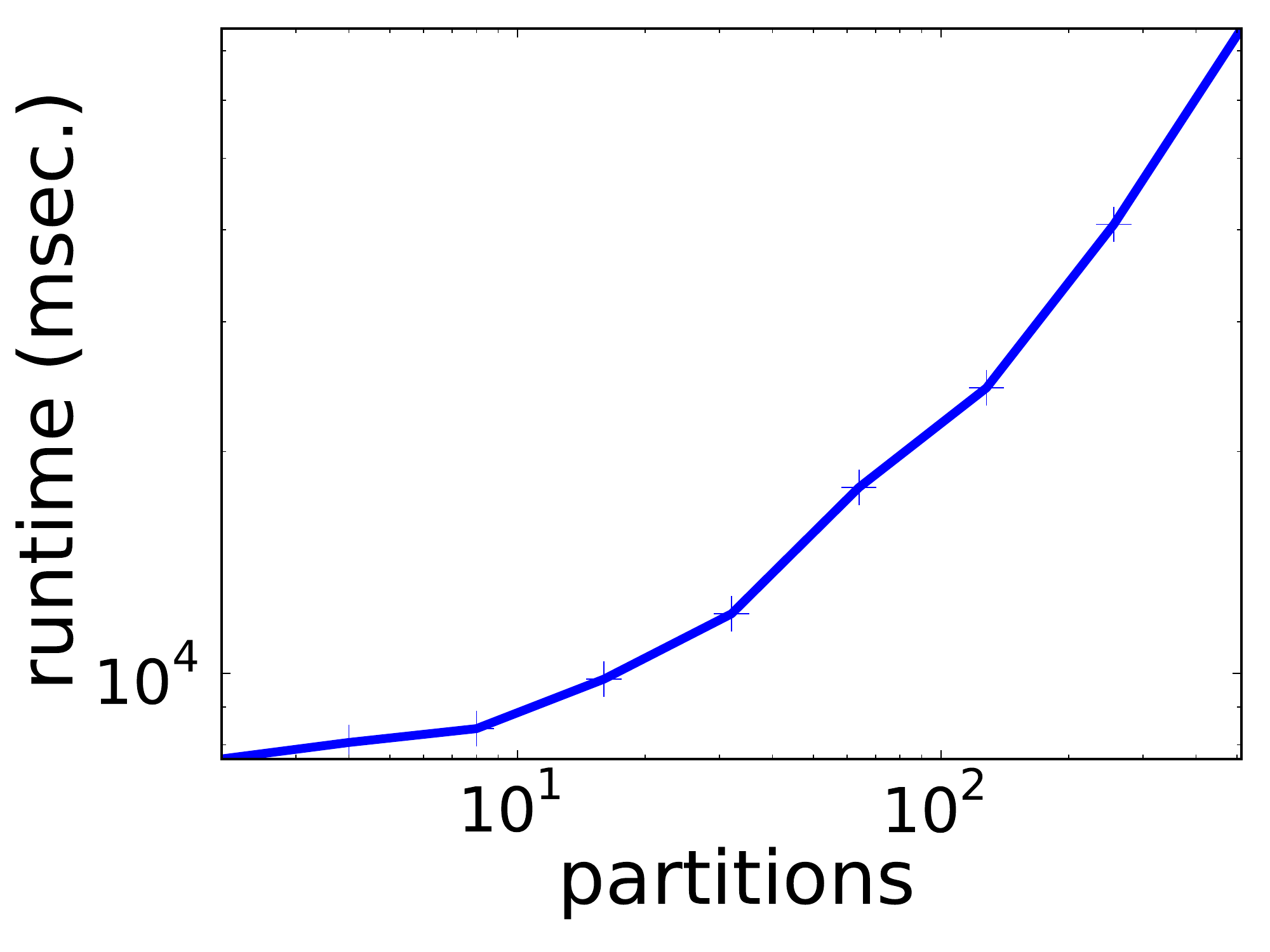}
		\label{fig:scalability_partitions}
	}
	\vspace{-10px}
	\caption{Scalability of \sys. (a) Runtime as a function of the number of vertices, (b) runtime as a function of the number of workers, (c) runtime as a function of the number of partitions.} 
	\label{fig:scalability}
	\vspace{-10px}
\end{figure*}

In these experiments we show that the algorithm affords a scalable
implementation on modern large-scale graph processing frameworks such as Giraph.
To this end, we apply our algorithm on synthetic graphs constructed with the
Watts-Strogatz model \cite{Watts-Colective-1998}. In all these experiments we
set the parameters of the algorithm as described in Section \ref{sec:quality}.
Using synthetic graphs for these experiments allows us to carefully control the
number of vertices and edges, still working with a graph that resembles a
real-world social network or web-graph characterized by ``small-world''
properties. On such a graph, the number of iterations required to partition the
graph does not depend only on the number of vertices, edges and partitions, but
also on its topology, and in particular on graph properties such as clustering
coefficient, diameter etc.

For this reason, to validate the scalability of the algorithm we focus on the
runtime of the \emph{first} iteration, notably the iteration where all vertices
receive notifications from all their neighbors, making it the most deterministic
and expensive iteration. Precisely, we compute the runtime of an iteration as
the sum of the time needed to compute the \emph{ComputeScores} and the following
\emph{ComputeMigrations} supersteps. This approach allows us to factor out the
runtime of algorithm as a function the number of vertices and edges.

Figure \ref{fig:scalability} presents the results of the experiments, executed
on a AWS Hadoop cluster consisting of 116 m2.4xlarge machines. In the first
experiment, presented in Figure \ref{fig:scalability_size}, we focus on the
scalability of the algorithm as a function of the number of vertices and edges
in the graph. For this, we fix the number of outgoing edges per vertex to 40. We
connect the vertices following a ring lattice topology, and re-wire $30\%$ of
the edges randomly as by the function of the $beta$ (0.3) parameter of the
Watts-Strogatz model. We execute each experiment with 115 workers, for an
exponentially increasing number of vertices, precisely from 2 to 1024 million
vertices (or one billion vertices) and we divide each graph in 64 partitions.
The results, presented in a loglog plot, show a linear trend with respect to the
size of the graph. Note that for the first data points the size of the graph is
too small for such a large cluster, and we are actually measuring the overhead
of Giraph.

In the second experiment, presented in Figure \ref{fig:scalability_workers}, we
focus on the scalability of the algorithm as a function of the number of
workers. Here, we fix the number of vertices to 1 billion, still constructed as
described above, but we vary the number of workers linearly from 15 to 115 with
steps of 15 workers (except for the last step where we add 10 workers).
The drop from 111 to 15 seconds with 7.6 times more workers represents
a speedup of 7.6.

In the third experiment, presented in Figure \ref{fig:scalability_workers}, we
focus on the scalability of the algorithm as a function of the number of
partitions. Again, we use 115 workers and we fix the number of vertices to 1
billion and construct the graph as described above. This time, we increase the
number of partitions exponentially from 2 to 512. Also here, the loglog plot
shows a near-linear trend, as the complexity of the heuristic executed by each
vertex is proportional to the number of partitions $k$, and so is cost of
maintaining partition loads and counters through the sharded aggregators
provided by Giraph.

\subsection{Partitioning dynamic graphs}\label{sec:eval:dynamic}
Due to the dynamic nature of graphs, the quality of an initial partitioning
degrades over time. Re-partitioning from scratch can be an expensive task if
performed frequently and with potentially limited resources.  In this section,
we show that our algorithm minimizes the cost of adapting the partitioning to
the changes, making the maintenance of a well-partitioned graph an affordable
task in terms of time and compute resources required.

\begin{figure}[t]
	\centering
	\subfigure[Cost savings]{
		\includegraphics[width=0.49\linewidth]{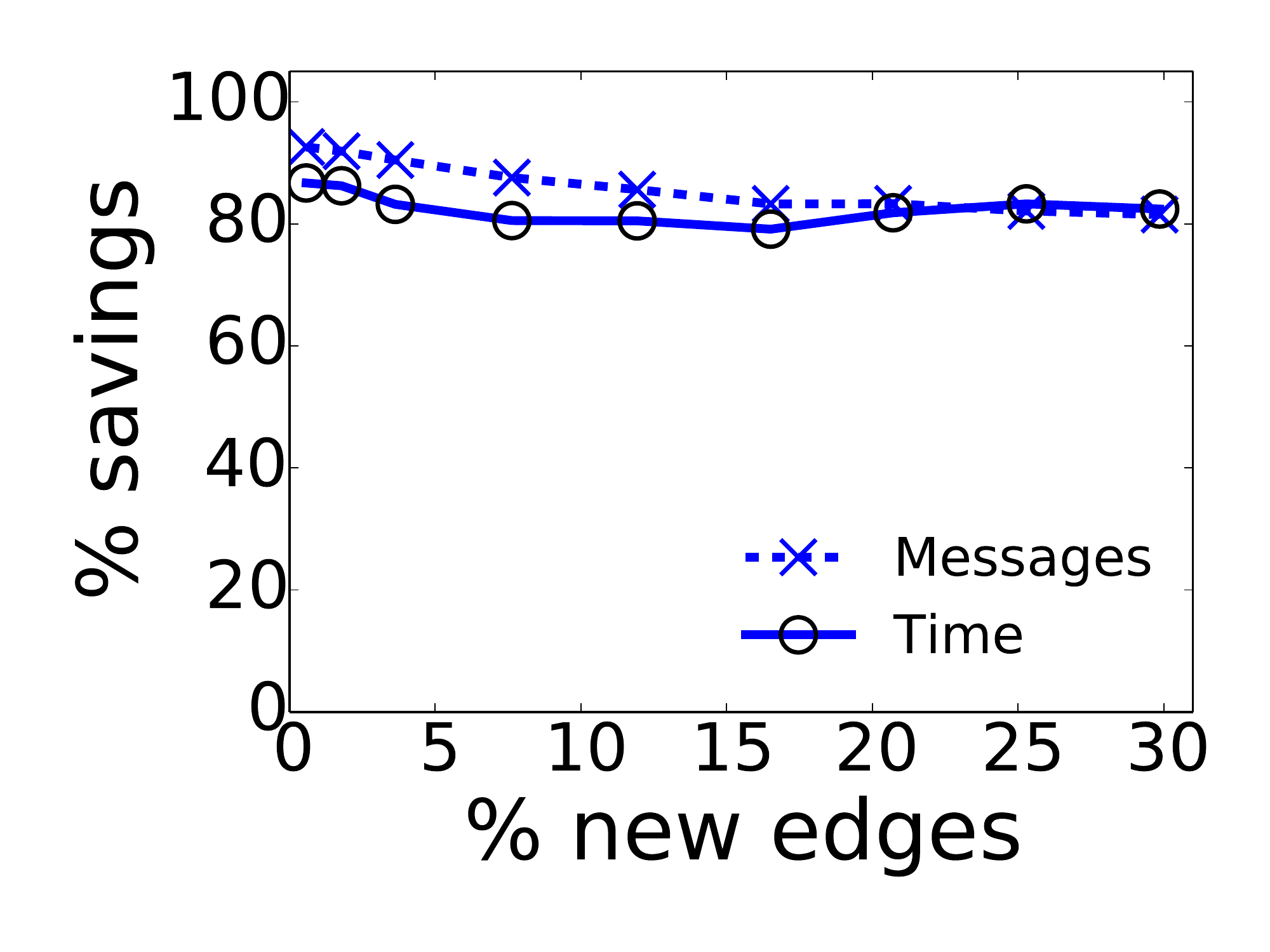}
		\label{fig:time_vs_edges}
	}
	\hspace{-16px}
	\subfigure[Partitioning stability]{
		\includegraphics[width=0.49\linewidth]{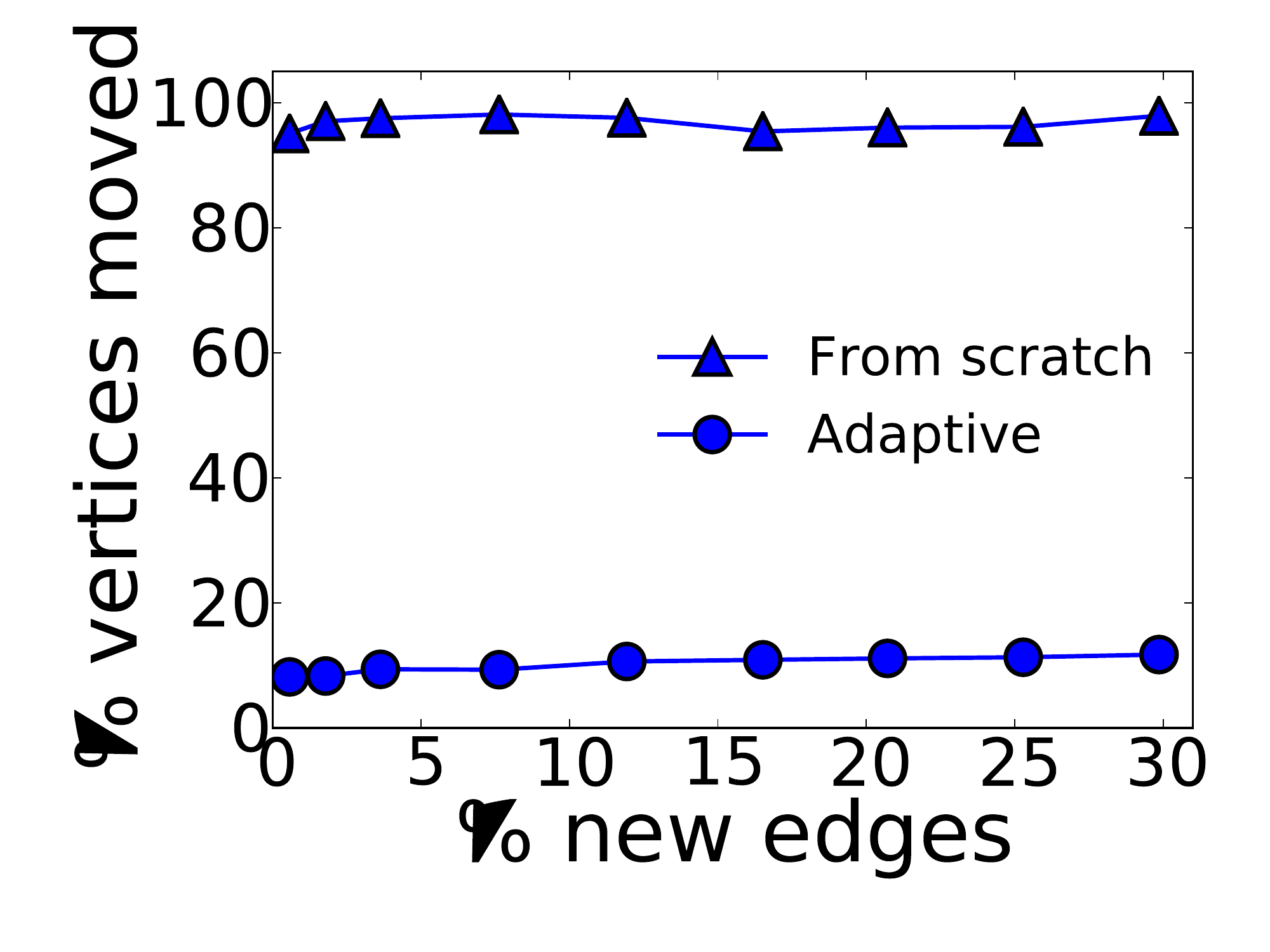}
		\label{fig:stability_vs_edges}
	}
    \vspace{-10px}
	\caption{Adapting to dynamic graph changes. We vary the percentage of
	new edges in the graph and compare our adaptive re-partitioning
	approach and re-partitioning from scratch with respect to (a) the
	savings in processing time and messages exchanged, and (b) the fraction
	of vertices that have to move upon re-partitioning. } 
	\vspace{-10px}
\end{figure}

Specifically, we measure the savings in processing time and number of messages
exchanged (i.e. load imposed on the network) relative to the approach of
re-partitioning the graph from scratch. We track how these metrics vary as a
function of the degree of change in the graph. Intuitively, larger graph
changes require more time to adapt to an optimal partitioning.

For this experiment, we take a snapshot of the Tuenti~\cite{tuenti} social
graph that consists of approximately 10 million vertices and 530 million edges,
and perform an initial partitioning.  Subsequently, we add a varying number of
edges that correspond to actual new friendships and measure the above metrics.
We perform this experiment on an AWS Hadoop cluster consisting of 10 m2.2xlarge
instances.

Figure~\ref{fig:time_vs_edges} shows that for changes up to 0.5\%, our approach
saves up to 86\% of the processing time and, by reducing vertex migrations, up
to 92\% of the network traffic. Even for large graph changes, the algorithm
still saves up to 80\% of the processing time. Note that in every case our
approach converges to a balanced partitioning, with a maximum normalized load
of approximately $1.047$, with 67\%-69\% local edges, similar to a
re-partitioning from scratch.

\subsection{Partitioning stability}

Adapting the partitioning helps maintain good locality as the graph changes,
but may also require the graph management system (e.g. a graph DB) to move
vertices and their associated state (e.g. user profiles in a social network)
across partitions, potentially impacting performance. Aside from efficiency,
the value of an adaptive algorithm lies also in maintaining \emph{stable}
partitions, that is, requiring only few vertices to move to new partitions
upon graph changes.  Here, we show that our approach achieves this goal. 

%This is beneficialy because X,Y,Z. 

We quantify the stability of the algorithm with a metric we call
\emph{partitioning difference}. The partitioning difference between two
partitions is the percentage of vertices that belong to different partitions
across two partitionings. This number represents the fraction of vertices
that have to move to new partitions.  Note that this metric is not the same as
the total number of migrations that occur during the execution of the algorithm
which only regards cost of the execution of the algorithm per se.

In Figure~\ref{fig:stability_vs_edges}, we measure the resulting partitioning
difference when adapting and when re-partitioning from scratch as a function of
the percentage of new edges.  As expected, the percentage of vertices that have
to move increases as we make more changes to the graph. However, our adaptive
approach requires only 8\%-11\% of the vertices to move compared to a 95\%-98\%
when re-partitioning, minimizing the impact.

\subsection{Adapting to resource changes}

\begin{figure}[t]
	\centering
	\subfigure[Cost savings]{
		\includegraphics[width=0.49\linewidth]{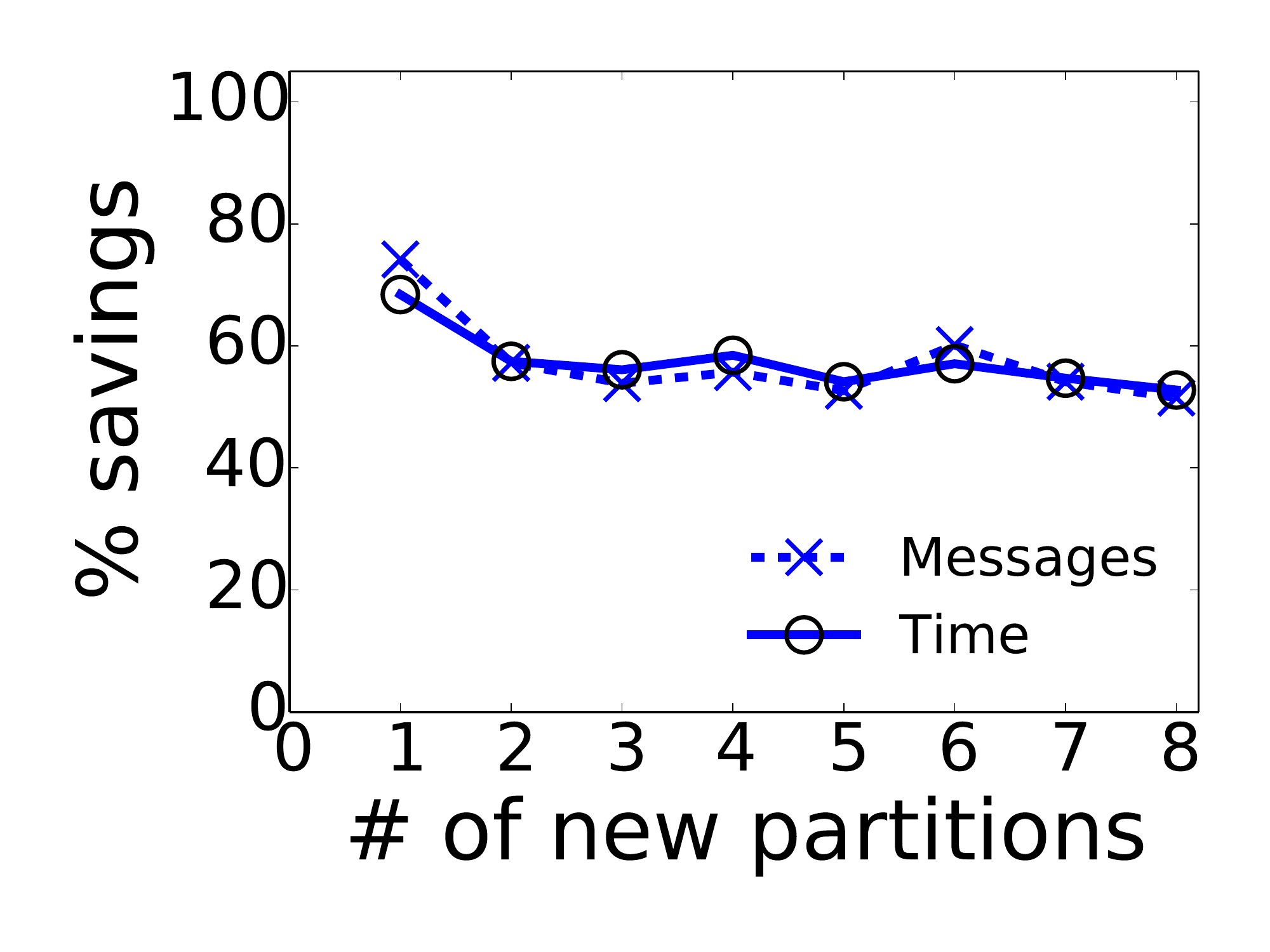}
		\label{fig:time_vs_newpartitions}
	}
	\hspace{-16px}
	\subfigure[Partitioning stability]{
		\includegraphics[width=0.49\linewidth]{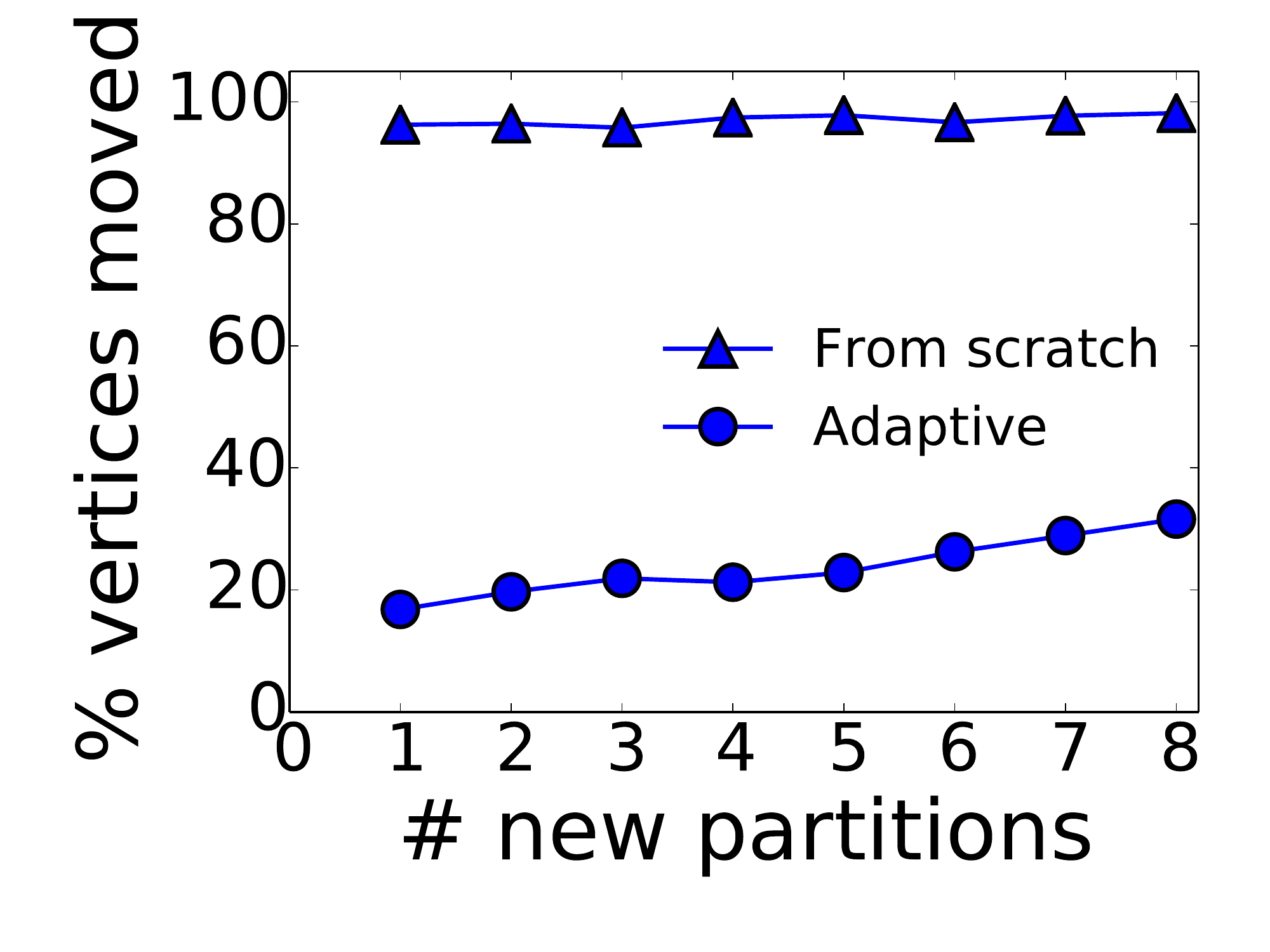}
		\label{fig:stability_vs_newpartitions}
	}
		\vspace{-10px}
	\caption{Adapting to resource changes.  We vary the number of new
	partitions and compare our adaptive approach and re-partitioning from
	scratch with respect to (a) the savings in processing time and messages
	exchanged, and (b) the fraction of vertices that have to move upon
	re-partitioning.  } 
	\label{fig:adaptresources}
	\vspace{-10px}
\end{figure}

Here, we show that \sys can efficiently adapt the partitioning when resource
changes force a change in the number of partitions. Initially, we partition the
Tuenti graph snapshot described in Section~\ref{sec:eval:dynamic} into 32
partitions.  Subsequently we add a varying number of partitions and either
re-partition the graph from scratch or adapt the partitioning with \sys.

Figure~\ref{fig:time_vs_newpartitions} shows the savings in processing time and
number of messages exchanged as a function of the number of new partitions.  As
expected, a larger number of new partitions requires more work to converge to a
good partitioning.  When increasing the capacity of the system by only 1
partition, \sys adapts the partitions 74\% faster relative to a
re-partitioning. 

Similarly to graph changes, a change in the capacity of the compute system may
result in shuffling the graph. In Figure~\ref{fig:stability_vs_newpartitions},
we see that a change in the number of partitions can impact partitioning
stability more than a large change in the input graph
(Figure~\ref{fig:stability_vs_edges}).  Still, when adding only 1 partition
\sys forces less than 17\% of the vertices to shuffle compared to a 96\% when
re-partitioning from scratch. The high percentage when re-partitioning from
scratch is expected due to the randomized nature of our algorithm.  Note,
though, that even a deterministic algorithm, like modulo hash partitioning, may
suffer from the same problem when the number of partitions changes.

\subsection{Impact on application performance}
\label{sec:application_performance}
The partitioning computed by \sys can be used by different graph management
systems, to improve their performance.  In this section, we use \sys to
optimize the execution of the Giraph graph processing system itself.  After
partitioning the input graph with \sys, we instruct Giraph to use the computed
partitioning and run real analytical applications on top. We then measure the
impact on performance compared to using standard hash partitioning.  

We use our computed partitioning in Giraph as follows. The output of \sys is a
list of pairs $(v_{i}, l_{j})$ that assigns each vertex to a partition. We use
this output to ensure that Giraph places vertices assigned to the same
partition on the same physical machine worker.  By default, when Giraph loads a
graph for computation, it assigns vertices to workers according to hash
partitioning, i.e. vertex $v_{i}$ is assigned to one of the $k$ workers
according to $h(v_{i})~\emph{mod}~k$.  We define a new vertex id type
$v'_i = (v_{i}, l_{j})$ that encapsulates the computed partition as well.
We then plug a hash function that uses only the $l_{j}$ field of the pair,
ensuring that vertices with the same label are placed on the same worker.
%When Giraph computes the hash partitioning function for a vertex $v_{i}$ (for
%example when the vertex is initially loaded into memory, or when a message is
%sent to that vertex and the destination worker needs to be identified), 

%Note that our partitioning strategy is transparent to the user; no change in
%user code is required.  For conditions where the actual id of the vertex is
%required, such as in the \emph{getId()} call of the API,  $v_{i}$ is returned
%instead.  It also does not require an external directory table where the
%mapping is stored.

First, we assess the impact of partitioning balance on the actual load balance
of the Giraph workers.  In a synchronous processing engine like Giraph, an
unbalanced partitioning results in the less loaded workers idling at the
synchronization barrier.  To validate this hypothesis, we partition the Twitter
graph across 256 partitions and run 20 iterations of the PageRank algorithm on
a cluster with 256 workers using (i) standard hash partitioning (random), and
(ii) the partitioning computed by \sys.  For
each run, we measure the time to compute a superstep by all the workers (Mean),
the fastest (Min) and the slowest (Max), and compute the standard deviation
across the 20 iterations. Table \ref{tab:balance_runtime} shows the results.

The results indicate that with hash partitioning the workers are idling on
average for $31\%$ of the superstep, while with Spinner for only $19\%$. While
the shorter time needed to compute a superstep can be imputed to the diminished
number of cut edges, the decreased idling time is an effect of a more even load
spread across the workers. 
%Note that asynchronous engines like GraphLab, are specifically designed to
%overcome this limitation in stationary algorithms like PageRank, at the cost of
%a more complex programming model.

\begin{table}
	\centering
	\begin{tabular}{l r r r r}
		\toprule
		Approach & Mean & Max. & Min. \\
		\hline
		Random & $5.8s \pm 2.3s$ &  $8.4s \pm 2.1s$ & $3.4s \pm 1.9s$ \\
		\textbf{\sys} & $4.7s \pm 1.5s$ & $5.8s \pm 1.3s$ &  $3.1s \pm 1.1s$ \\
       \bottomrule
	\end{tabular}
	\caption{Impact of partitioning balance on worker load. The table shows the time spent by 
	workers to conclude a superstep.}
	\label{tab:balance_runtime}
	\vspace{-10px}
\end{table}

Second, we measure the impact of the partitioning on processing time. We
partitioned three graphs with Spinner and hash partitioning, and we compared
the time to run three representative algorithms commonly used inside graph
analytical pipelines. Shortest Paths, computed through BFS is commonly used to
study the connectivity of the vertices and centrality, PageRank is commonly
used at the core of ranking graph algorithms, and Connected Components, as a
general approach to finding communities. We present the results in Figure
\ref{fig:app_perf}.

Notice that using the partitionings computed by \sys we significantly improve the
performance across all graphs and applications. In the case of the Twitter
graph, which is denser and harder to partition, the improvement ranges from
$25\%$ for SP to $35\%$ for PR. In the case of LiveJournal and Tuenti, the
running time decreases by up to 50\%.

\begin{figure}[t]
	\centering
		\includegraphics[width=0.94\linewidth]{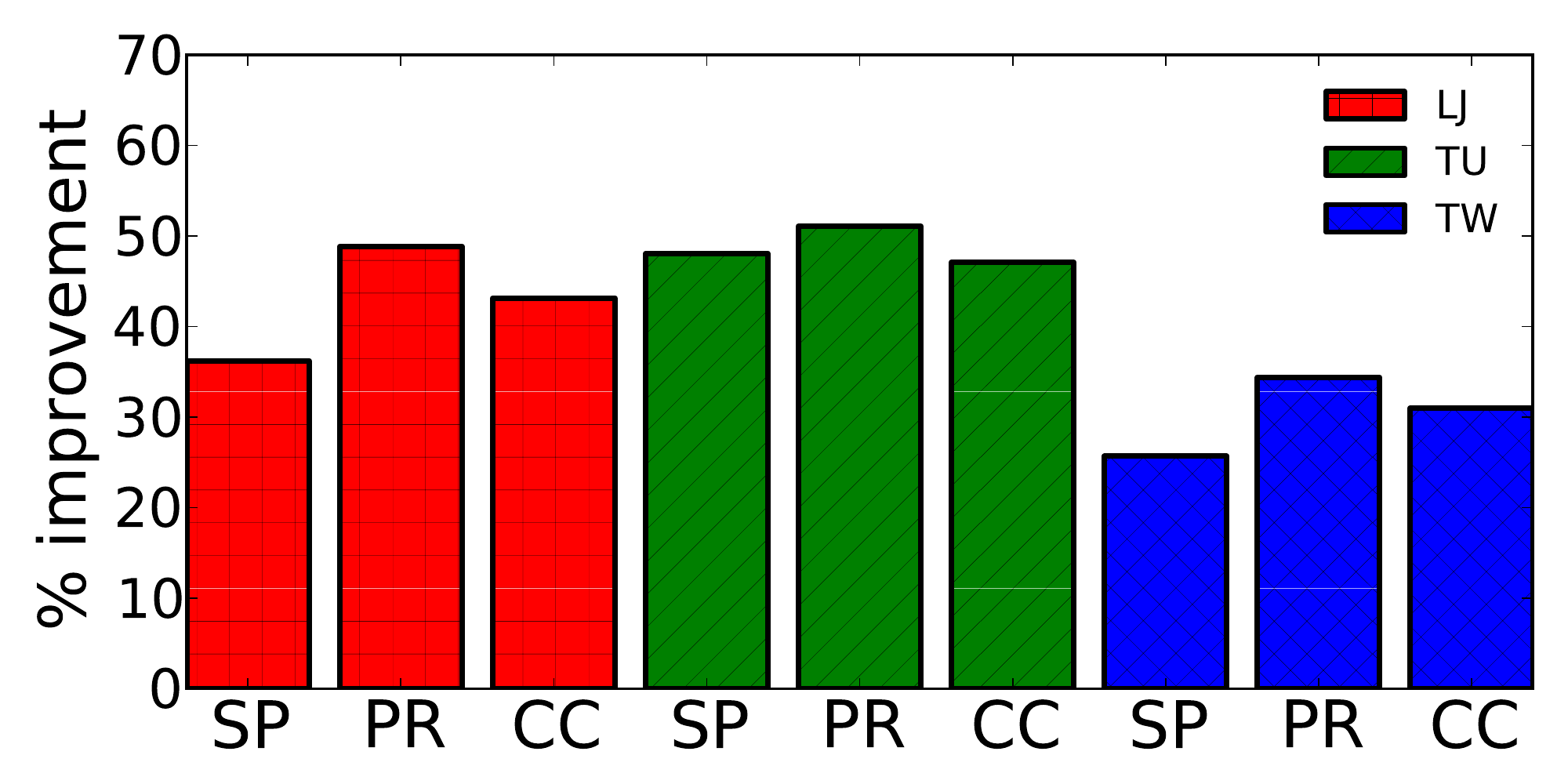}
			\vspace{-10px}
	\caption{Impact on application performance. We measured the runtime 
	improvement between hash partitioning and Spinner. We evaluated three different 
	applications, Single Source Shortest Paths/BFS (SP), PageRank (PR), and Weakly 
	Connected Components (CC). The LJ graph was evaluated across 16 partitions, TU across
	32 partitions and TW across 64.} 
	\label{fig:app_perf}
	\vspace{-6px}
\end{figure}

%\section{Impact on Real Applications}

\section{Related Work}\label{sec:related}
Graph partitioning that has been studied in various domains.  In this section,
we present the related work on k-way balanced graph partitioning.
%We focus on the scalability of the approaches, and their ability to adapt a partition to 
%changes to the graph and to the computing environment. To this end, we choose to potentially 
%trade some of the 
%performance measured in terms of pure partitioning quality with a more scalable
%and practical approach to graph management. Intuitively, we aim at solution
%that performs substantially better than the most practical solution (i.e. random 
%partitioning), but is nearly as practical.

 % with the aim 
% of replacing the random hashing technique, the current de-facto standard, with a light-weight 
% general graph partitioning technique. 

% Following the problem definition, 

METIS~\cite{Karypis1995} is an offline partitioning algorithm and is considered
the golden standard against new approaches. It is known to produce partitions
with very good locality and balance. However, it is not suitable for very large
graphs, due to high computational and space complexity, and the need to
repartition graphs from scratch every time a change is introduced in the graph
or the number of partitions.

Streaming algorithms~\cite{Stanton2012, Tsourakakis2014, Nishimura2013} avoid
this complexity through lightweight heuristics that assign vertices to
partitions in only one pass over the graph. However, parallelizing these
algorithms requires every participating machine to maintain a consistent view
of the partitioning assignments of the entire graph. This requires distribution
of the assignments across the cluster and coordination among the machines.
Implementing this mechanism in a scalable and efficient manner is challenging.
To the best of our knowledge, none of these approaches have been implemented
on top of scalable data processing models.

The closest approaches to \sys are \cite{Ugander2013, Wang2013}. The former
applies LPA to the MapReduce model, by attempting to improve locality through
iterative vertex migrations across partitions. However, to guarantee balanced
partitions, it executes a centralized linear solver between any two iterations.
The complexity of lineary system is quadratic to the number of partitions and
proportional to the size of the graph, making it expensive for large graphs.
Moreover, MapReduce is known to be inefficient for iterative computations.  The
approach in \cite{Wang2013} computes a k-way vertex-based balanced partitioning.
It uses LPA to coarsen the input graph and then applies Metis to the coarsened
graph. At the end, it projects the Metis partitioning back to the original
graph. While the algorithm is scalable, we have found that for large number of
partitions and skewed graphs, the locality it produces is lower than \sys due to
the coarsening. We also found that the approach is very sensitive to its two
parameters for whom no intuition is available (differently from \sys that
requires only one parameter for which we provide a rationale). Further, the
approach is designed to run on the Trinity engine and is not suitable for
implementation on a synchronous model such as Pregel. None of the two solutions
investigates how to adapt a partitioning upon changes in the graph or the
compute environment.

%\sys is based on the Label Propagation Algorithm (LPA) \cite{Barber2009}. LPA
%has been proposed as an alternative for scalable graph partitioning as it
%affords a distributed implementation \cite{Ugander2013, Wang2013}. The closest
%technique to \sys is Balanced LPA \cite{Ugander2013}. Like \sys, it iteratively
%attempts to improve the locality by migrating vertices across partitions.
%However, to guarantee balanced partitions, it executes a centralized linear
%solver between any two iterations. The complexity of such a solution is
%quadratic to the number of partitions and proportional to the size of the
%graph, making it expensive for large graphs.

%Mizan~\cite{Khayyat2013} is a system that looks at the problem of dynamically
%adapting the graph partitioning in graph processing systems.  Mizan does this
%through runtime monitoring of the performance of the graph processing system,
%for instance, to find hotspots in specific machines, and migrates vertices to
%other workers to balance the load. Unlike our approach, Mizan ignores the graph
%topology.

%SPAR~\cite{Pujol2011} and Sedge~\cite{Yang2012} also consider the problem of
%distributing a graph to different machines and adapting the distribution.
%However, they focus on minimizing latency for simple graph queries, and address
%the problem through careful replication of the vertices across machines.
%Instead our approach is suitable for more complex graph analytics.

Mizan~\cite{Khayyat2013} views the problem of dynamically adapting the graph
partitioning from a different perspective.  It monitors runtime performance of
the graph processing system, for instance, to find hot\-spots in specific
machines, and migrates vertices across workers to balance the load during the
computation of analytical applications. Their solution is specific to a graph
processing system and orthogonal to the problem of k-way balanced partitions. 
SPAR~\cite{Pujol2011} and Sedge~\cite{Yang2012} also consider the problem of
adapting graph distribution. However, they focus on minimizing latency for
simple graph queries, and address the problem through replication of the
vertices across machines.

%\section{Discussion and Future Work}
%Nothing to discuss, it's cool. No future work, we've done everything.

\section{Conclusions}\label{sec:concl}
We presented \emph{\sys}, a scalable and adaptive graph partitioning algorithm
built on the Pregel abstraction. By sacrificing strict guarantees on balance,
\sys is practical for large-scale graph management systems.  Through an
extensive evaluation on a variety of graphs, we showed that \sys computes
partitions with locality and balance comparable to the state-of-the-art, but
can do so at a scale of at least billion-vertex graphs.  At the same time, its
support for dynamic changes makes it more suitable for integrating into real
graph systems. These properties makes \sys possible to use as a generic
replacement of the de-facto standard, hash partitioning, in cloud systems.
Toward this, our scalable, open source implementation on Giraph makes \sys easy
to use on any commodity cluster.

%We describe our implementation in the Pregel programming model and the
%integration to the Giraph graph processing system. We evaluate \sys with a
%variety of synthetic and real graphs and show that it can compute partitions
%with quality comparable to the state-of-the art.  Last, we show that by
%integrating \sys into the Giraph graph analytics engine, we speed up processing
%of various applications by up to $50\%$ relative to standard hash partitioning.  

%with good locality and balance and efficiently adapts the partitioning upon changes. 

\bibliographystyle{shortabbrv}
\bibliography{paper}

%\vspace{-4pt}

\section*{BIOGRAPHIES}

\vspace{-50pt}
\begin{IEEEbiography}[{\includegraphics[width=1in,height=1.25in,clip,keepaspectratio]{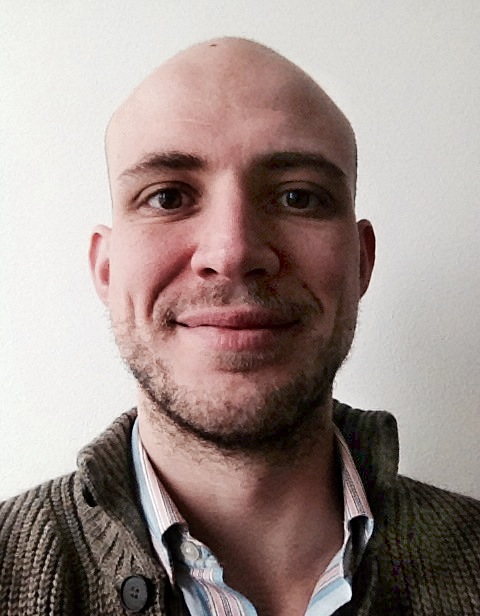}}]{Claudio Martella}
 is PhD candidate at the Computer Systems group of VU University Amsterdam, where he researches complex networks. In particular, he is interested in modelling problems with spatio-temporal networks and investigating platforms for large-scale graph processing.
He is also an active contributor to Apache Giraph and various projects in the Hadoop ecosystem.
\vspace{-50pt}
\end{IEEEbiography}

\begin{IEEEbiography}[{\includegraphics[width=1in,height=1.25in,clip,keepaspectratio]{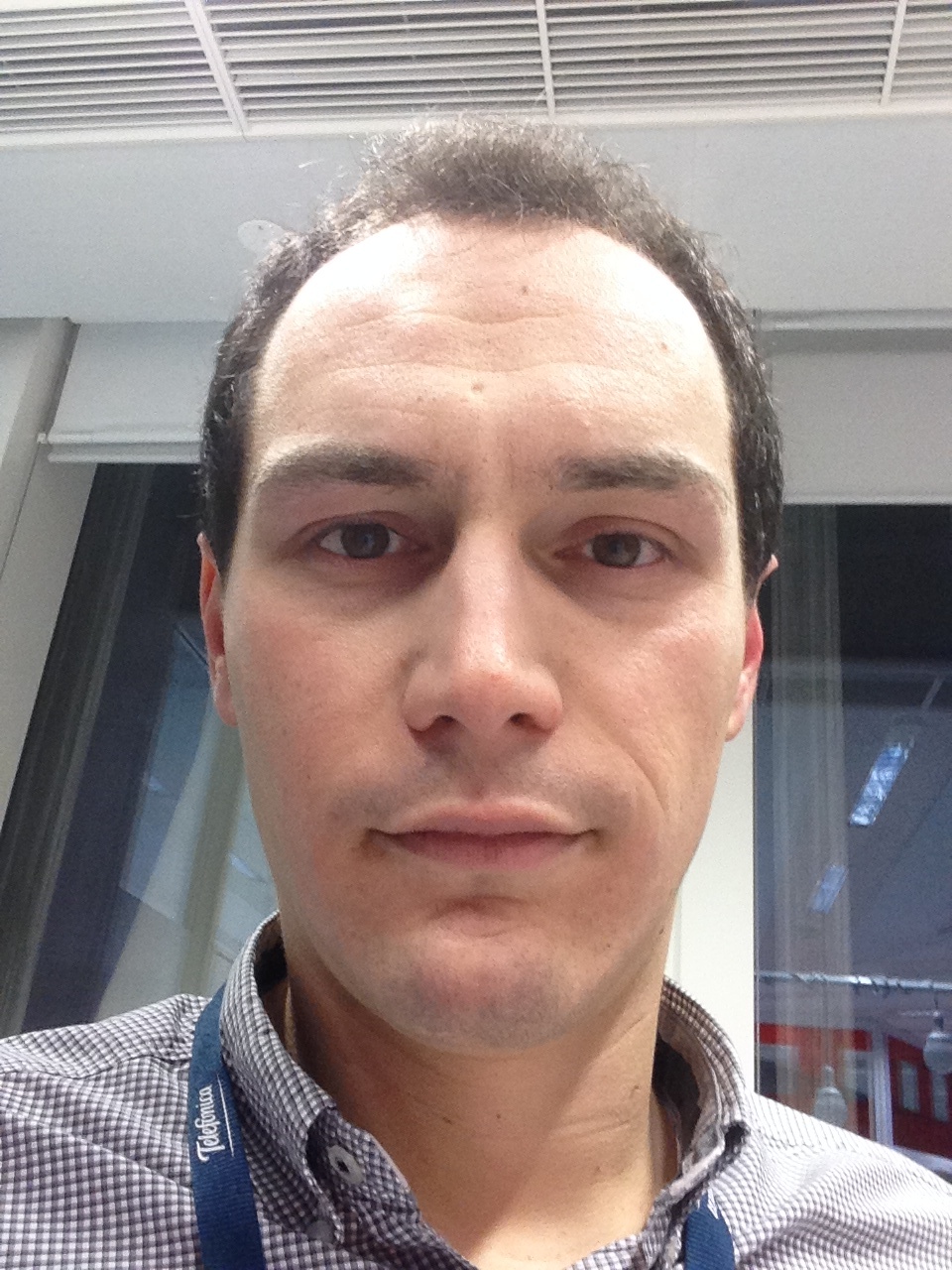}}]{Dionysios Logothetis}
 is an Associate Researcher with the Telefonica
 Research lab in Barcelona, Spain. His research interests lie in the
 areas of large scale data management with a focus on graph mining,
 cloud computing and distributed systems. He holds a PhD in Computer
 Science from the University of California, San Diego and a Diploma in
 Electrical and Computer Engineering from the National Technical
 University of Athens.
 \vspace{-50pt}
\end{IEEEbiography}

\begin{IEEEbiography}[{\includegraphics[width=1in,height=1.25in,clip,keepaspectratio]{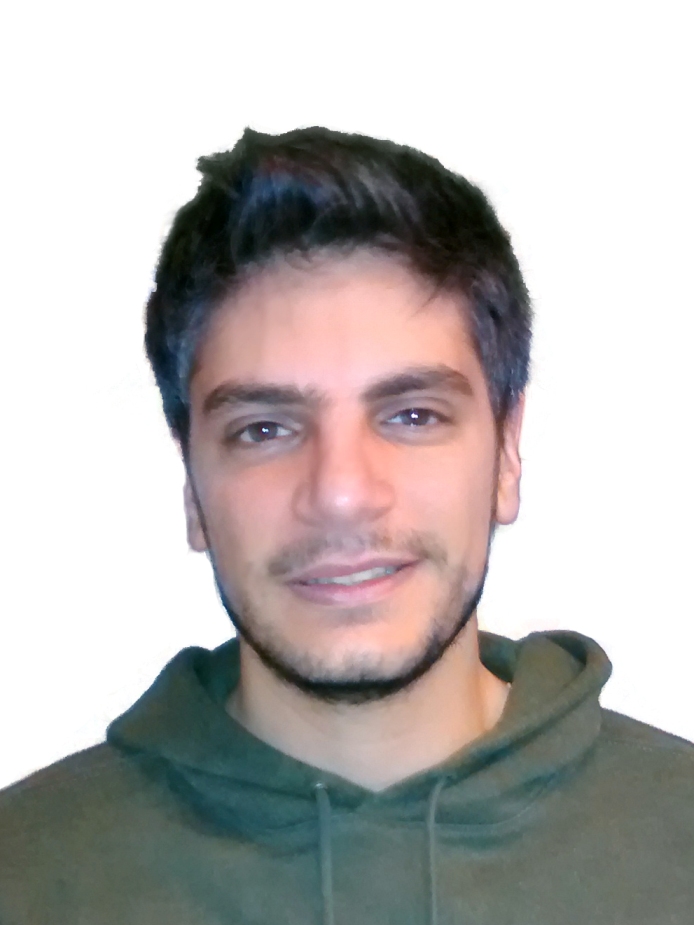}}]{Andreas Loukas}
attained a diploma in Computer Science and Engineering from Patras University, in Greece. He also pursued a PhD with the Embedded Software group at Delft University of Technology. There he gained engineering experience in working with real wireless systems, but also realized his appreciation for rigorousness. Currently, he is doing a postdoc on the same group focusing on distributed signal processing on graphs. His research interests include graph analysis and distributed algorithms.
\vspace{-50pt}
\end{IEEEbiography}

\begin{IEEEbiography}[{\includegraphics[width=1in,height=1.25in,clip,keepaspectratio]{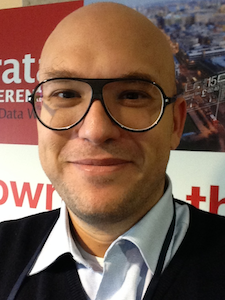}}]{Georgos Siganos}
is a Senior Scientist at Qatar Computing Research Institute working on next generation Graph Mining Architectures and Big Data Systems. Previous to this, he was a Research Scientist at Telefonica Research in Barcelona Spain focusing on Big Data and Peer to Peer Systems. He has authored more than 30 papers in journals and conferences. He received his Ph.D. from the University of California, Riverside.
\end{IEEEbiography}

%\appendix
\section*{APPENDIX A}
\newtheorem*{prop}{Proposition~\ref{prop:bconnectivity}}
\begin{prop}
	If the partition graph $\{P_{t>0}\}$ is $B$-connected, one can always find constants $\mu \in (0,1)$ and $q \in \mathbf{R}^{+}$, for which Spinner converges exponentially $\|x_t - x^{\star} \|_{\infty} / \|x_0\|_{\infty} \leq q \mu^{t-1}$ to an even balancing $x^{\star} = [C\ C\ \ldots\ C]^\top\hspace{-1mm}$, with $C = |E|/k$. 
\end{prop}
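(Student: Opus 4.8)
The plan is to cast the load recursion $x_t = X_{t:1}\,x_0$ as a \emph{backward product of doubly-stochastic matrices} and to invoke classical weak-ergodicity estimates for such products (as in \cite{tsitsiklis1984problems,touri2012}). First I would pin down the algebraic structure of each $X_t$. A logical migration only transfers the degree-mass $deg(v)$ of a vertex from one partition's load to another, so the total load $\mathbf{1}^\top x_t = \mathbf{1}^\top x_0 = kC$ is invariant; this forces each $X_t$ to be column-stochastic. Moreover the even balancing $x^\star = C\mathbf{1}$ must be a fixed point of the idealised balancing step, i.e.\ $X_t \mathbf{1} = \mathbf{1}$, so each $X_t$ is in fact \emph{doubly stochastic} (and, if one models the diffusion of load symmetrically, even symmetric). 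Its zero/non-zero pattern is exactly the partition graph of Definition~\ref{def:bconnectivity}: $[X_t]_{ij}>0$ for $i\neq j$ iff $(l_j,l_i)\in Q_t$, while the diagonal entries are non-trivial because $c>1$ always leaves a partition with slack, so the penalty $\pi(\cdot)$ never expels an entire partition in one step.

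Next I would pass to the error vector $e_t := x_t - x^\star$. Double stochasticity gives $X_t x^\star = x^\star$, hence $e_t = X_t e_{t-1} = X_{t:1}\,e_0$, and column-stochasticity gives $\mathbf{1}^\top e_t = \mathbf{1}^\top e_0 = 0$, so the whole trajectory lives in the hyperplane $\mathbf{1}^\perp$. Each $X_t$ is an averaging operator, hence non-expansive in $\|\cdot\|_\infty$ and invariant on $\mathbf{1}^\perp$; thus convergence to the even balancing is equivalent to the backward products becoming contractive on $\mathbf{1}^\perp$, i.e.\ $X_{t:1}\to \tfrac1k\mathbf{1}\mathbf{1}^\top$.

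The heart of the argument is a uniform ellipticity bound: I would exhibit a constant $\eta\in(0,1]$ such that, for every $t$, every diagonal entry and every \emph{positive} off-diagonal entry of $X_t$ is at least $\eta$. This is where the migration mechanics of Section~\ref{sec:migrations} enter — the acceptance probability $r(l)/m(l)$, the integrality of per-vertex degrees (a migrating vertex carries at least the minimum degree, and $c>1$ keeps remaining capacities away from the extremes) bound each transferred \emph{portion} away from $0$ whenever it is nonzero, and symmetrically bound the retained portion away from $0$. Given this, $B$-connectivity together with the fact that a strongly connected digraph on $k$ nodes has diameter $<k$ implies that every window product of $B' := (k-1)B$ consecutive matrices is entrywise positive with all entries $\ge \eta^{B'}$ (one realises any short path in the union graph, using positive diagonals to "wait"). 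A stochastic matrix with all entries $\ge \delta$ has a scrambling/Dobrushin coefficient bounded by $1-k\delta<1$, which controls the $\ell_\infty$-contraction on $\mathbf{1}^\perp$; chaining windows yields $\|e_{nB'}\|_\infty \le (1-k\eta^{B'})^{n}\,\|e_0\|_\infty$.

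Finally I would collect constants. Writing $t = nB' + s$ with $0\le s < B'$ and using non-expansiveness on the leftover $s$ steps, together with $\|e_0\|_\infty \le \|x_0\|_\infty + C \le 2\|x_0\|_\infty$ (since $C = \tfrac1k\mathbf{1}^\top x_0 \le \|x_0\|_\infty$ as $x_0\ge 0$), one gets $\|x_t - x^\star\|_\infty / \|x_0\|_\infty \le q\,\mu^{t-1}$ with $\mu := (1-k\eta^{B'})^{1/B'}\in(0,1)$ and $q$ absorbing the $\le B'$ leftover steps and the factor $2$. I expect the main obstacle to be the uniform bound $\eta$: it must hold at \emph{every} iteration regardless of the adversarially evolving loads and candidate sets, yet the migrated portions are state-dependent — the $c>1$ slack and the discreteness of degrees are the levers that make it work, while the remainder is standard bookkeeping with coefficients of ergodicity.
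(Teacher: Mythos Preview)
Your route and the paper's coincide at the level of machinery: both reduce to weak ergodicity of backward products of stochastic matrices under $B$-connectivity. The paper, however, does not reconstruct the contraction argument; it simply records that each $X_t$ is $1$-local, row-stochastic, and has its positive entries uniformly bounded below by $1/n$, and then invokes \cite[Lemma~5.2.1]{tsitsiklis1984problems} and \cite[Theorem~2.4]{touri2012} to obtain ergodicity of $X_{t:1}$ together with the exponential estimate $\|X_{t:1}-\mathbf{1}\pi^\top\|_\infty\le q\mu^{t-1}$.

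The one place your derivation genuinely diverges is the double-stochasticity claim. Column-stochasticity does follow from load conservation, as you say, but your justification for $X_t\mathbf{1}=\mathbf{1}$---that the even balancing ``must be a fixed point of the idealised balancing step''---is not supported by \sys's mechanics: the score~(\ref{form:LPA3}) rewards locality as well as balance, so even at $x=x^\star$ vertices can migrate and shift load between partitions, and there is no reason the incoming fractions at a given partition should sum to one. The paper sidesteps this. It takes row-stochasticity as the modeling primitive (the \emph{outgoing} portions of each partition sum to one, which is automatic), obtains from ergodicity only that the limit is a constant vector $\mathbf{1}\pi^\top x_0$ for some stochastic $\pi$, and then invokes total-load conservation as a separate physical fact to identify the constant as $|E|/k$. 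In other words, the paper does not need $x^\star$ to be a fixed point a~priori; it falls out a~posteriori. Your Dobrushin-window argument is otherwise standard and would go through with row-stochasticity alone (that is all one needs for $\|\cdot\|_\infty$-non-expansiveness and the scrambling bound), with load conservation used only at the end to name the limit. Finally, the uniform lower bound $\eta$ you try to extract from the acceptance probabilities and the $c>1$ slack is not derived in the paper either; it is simply postulated as $1/n$.
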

\begin{proof}
Even though $0 \leq [X_t]_{ij} \leq 1$ is time-dependent and generally unknown, it has the following properties:
\begin{itemize}
\item $X_t$ is \emph{1-local}: $[X_t]_{ij} > 0$ iff $(l_i,l_j) \in Q_t$ and $[X_t]_{ii} > 0$ for all $i$
\item $X_t$ is \emph{row-stochastic}: $\sum_{j = 1}^{k} [X_t]_{ij} = 1$ for all $i$
\item $X_t$ is \emph{uniformly bounded}: if $[X_t]_{ij} > 0$ then $[X_t]_{ij} \geq 1/n$
\end{itemize}
In general, the above properties do not guarantee ergodicity. Nevertheless, we can obtain stronger results if the partition graph is $B$-connected.
From~\cite[Lemma~5.2.1]{tsitsiklis1984problems} (see also~\cite[Theorem~2.4]{touri2012}), %if in addition to the aforementioned properties $\{P_{t>0}\}$ is $B$-connected, then
$X_{t:1}$ is ergodic and 
	$x_\infty = \lim_{t\rightarrow\infty} X_{t:1}\, x_0 = \mathbf{1} \pi^\top x_0,$
where $\mathbf{1}$ is the one vector and $\pi$ is a stochastic vector. Furthermore, constants $\mu \in (0,1)$ and $q \in \mathbf{R}^{+}$ exist for which 
\begin{align*}
	\frac{\| x_t - x_\infty \|_{\infty}}{ \|x_0\|_{\infty}} \leq \| X_{t:1} - \mathbf{1} \pi^\top \|_{\infty} \leq q \mu^{t-1}.
\end{align*}
Furthermore, from ergodicity, $\lim_{t\rightarrow\infty} X_{t:1}$ is a rank-one matrix and $x_\infty({l_i}) = x_\infty(l_j)$ for all partitions $l_i,l_j$ and by construction the total load is always equal to the number of graph edges. It follows that, for all $l$, $x_\infty(l) = \sfrac{|E|}{k} = C = x^\star(l)$.
\end{proof}

\newtheorem*{prop2}{Proposition~\ref{prop:general}}
\begin{prop2}
	Spinner converges in bounded time.
\end{prop2}
\begin{IEEEproof}
Convergence is proven by first splitting $\{P_{t>0}\}$ according to Lemma~\ref{lemma:partitioning} and then applying Proposition~\ref{prop:bconnectivity} for each of the resulting subgraphs. The time required until $ \|x_{t} - x^{\star} \|_{\infty} / \| x_0\|_{\infty} \leq \varepsilon$ is at most $\leq \log_{\mu}{\left(\varepsilon / q\right)} + 1 + T$, which is equal to the splitting time $T$ added to the time for exponential convergence. 
\end{IEEEproof}

\begin{lemma}
	Labels $l \in L$ can be always split into $p$ subsets $L_1, L_2, \ldots, L_p$ having the following properties:
	\begin{enumerate}
		\item Subsets $L_i$ are non-empty, disjoint, and cover $L$.
		\item Any induced subgraph $\{P_{t>0}(L_i)\}$, i.e., the subgraph of $\{P_{t>0}\}$ which includes only nodes in $L_i$ and all edges with both endpoints in $L_i$, is $B$-connected.
		\item A bounded time $T$ exists after which $\{P_{t>T}(L_i)\}$ and $\{P_{t>T}(L_j)\}$ are disconnected $\forall i,j$. In other words, no edge connects two subgraphs for $t>T$. 
	\end{enumerate}
	\label{lemma:partitioning}
\end{lemma}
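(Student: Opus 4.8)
The plan is to read the split $L_1,\dots,L_p$ off the \emph{long-run} connectivity of the partition graph, and then check the three properties by combining a finiteness argument with a load-conservation argument. First I would observe that the labeling at iteration $t$ lives in the finite set $L^{V}$, so the partition graph $P_t=(L,Q_t)$ ranges over the finite family of subgraphs of the complete directed graph on $L$. Consequently the edge set $Q_\infty$ of edges that occur in $Q_t$ for infinitely many $t$ is well defined, and there is a finite time $T_0$ after which $Q_t\subseteq Q_\infty$ for all $t>T_0$ while every edge of $Q_\infty$ still recurs infinitely often after $T_0$. I would then define $L_1,\dots,L_p$ to be the strongly connected components of $(L,Q_\infty)$; property~(1) is immediate.

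For property~(3) I would use that the total load is conserved, $\sum_{l} x_t(l)=|E|$ at every iteration, and that an active edge of $Q_t$ transfers a strictly positive, uniformly lower-bounded amount of load (a migrating vertex carries at least one edge, so the transferred load is $\geq 1$). Pass to the condensation DAG of $(L,Q_\infty)$. If this DAG had any edge it would have a source component $S$ (reached by following that edge backwards) with an outgoing edge; after $T_0$, $S$ receives no load from the other components, yet at infinitely many times $t>T_0$ at least one edge-unit leaves $S$ and, by acyclicity of the condensation, never returns. This drives the load of $S$, which is nonnegative and at most $|E|$, to $-\infty$, a contradiction. Hence the condensation has no edges, that is, $Q_\infty$ contains no edge between distinct $L_i$, so for all $t>T_0$ no edge of $Q_t$ joins two different subsets; taking $T\ge T_0$ gives property~(3).

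Property~(2) is where the real work lies. By construction each induced graph $(L_i,\,Q_\infty\cap(L_i\times L_i))$ is strongly connected, and after $T_0$ every active edge inside $L_i$ belongs to this set and recurs infinitely often. The gap between ``recurs infinitely often'' and ``$B$-connected'' (every length-$B$ window has a strongly connected union) is exactly a slowdown phenomenon: abstract sequences of $1$-local, row-stochastic, uniformly bounded matrices can have recurrence gaps that grow without bound, so these properties alone do not yield a uniform $B$. The plan is therefore to exploit that \sys's migration pattern is a function of its finite state: once the dynamics has entered its recurrent regime the sequence $(Q_t)_{t>T}$ is eventually periodic, so setting $B$ to its period (and, across all $i$, to a common multiple of these periods) makes every window of length $B$ contain a full period, hence a strongly connected spanning subgraph of each $L_i$. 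Enlarging $T$ past the pre-period then establishes property~(2).

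The hard part will be making this last step airtight, i.e.\ justifying that the recurrences inside each $L_i$ occur at bounded intervals rather than merely infinitely often; the argument has to rest on the finiteness and eventual periodicity of the state dynamics, not on the three abstract properties of $X_t$ in isolation. Once the lemma is available, Proposition~\ref{prop:general} follows as sketched in the excerpt: apply Proposition~\ref{prop:bconnectivity} to each $\{P_{t>T}(L_i)\}$ separately, and add the splitting time $T$ to the exponential-convergence time.
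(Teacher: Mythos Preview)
Your route differs from the paper's. The paper argues top--down by recursion: if $\{P_{t>0}\}$ is not $B$-connected it picks a time $T_1$ at which connectivity first fails, splits $L$ into the groups of labels that still exchange load after $T_1$, and recurses on each group; termination is guaranteed because singletons $\{l_i\}$ are trivially $B$-connected. Property~(2) therefore holds \emph{by construction}---the recursion only stops at $B$-connected pieces---while property~(3) is harvested at each split and $T$ is taken as the last splitting time. You instead read the split off the limiting edge set $Q_\infty$ in one shot. That buys you a cleaner and fully correct treatment of properties~(1) and~(3): the load--conservation argument (a source SCC of the condensation would bleed at least one edge of load at infinitely many times without ever being replenished) is more transparent than anything the paper writes down.

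The gap you flag in property~(2) is real, and your proposed patch via eventual periodicity does not close it. \sys is not a deterministic finite--state system: ties in the score are broken at random (Section~\ref{sec:LPA}) and migrations are carried out probabilistically (Section~\ref{sec:migrations}), so the labeling evolves as a Markov chain on $L^{V}$ rather than under a deterministic map. A realized trajectory of such a chain is almost surely \emph{not} eventually periodic, and the gaps between successive occurrences of an edge in $Q_\infty$ need not be uniformly bounded. Thus ``each edge of $Q_\infty$ recurs infinitely often'' does not upgrade to ``$\{P_{t>0}(L_i)\}$ is $B$-connected'' along the line you sketch; the three matrix properties you rightly refuse to lean on are in fact all that the model guarantees. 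The paper avoids this trap by never trying to certify $B$-connectivity of a nontrivial block: any block that fails is split again, and the singleton base case ends the recursion. A natural repair of your argument is to graft exactly this recursion onto your SCC decomposition---refine any $L_i$ that fails $B$-connectivity, observing that property~(3) is preserved under refinement at the cost of enlarging $T$---but as written the periodicity step would fail.
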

% --------------------------------------------------------------------------------
\begin{IEEEproof}
To begin with, observe that when $\{P_{t>0}\}$ is $B$-connected, the statement is obvious for $L_1 = L$ and $p=1$. 
Assume that $B$-connectivity does not hold and let $T_1$ be the -\emph{latest}- time for which $\{P_{T_1 \geq t>0}\}$ is $B$-connected, under all possible bounded $B$. Notice that, by construction, a $p_1$-split exists which satisfies the \emph{first} and \emph{third} property (each subset includes all labels that exchange load after $T_1$). If such a partitioning did not exist, then $\{P_{T_1+1 \geq t > 0}\}$ must have been $B$-connected---a contradiction. 
Though this first splitting does not guarantee the \emph{second} property, \ie that $\{P_{t>0}(L_i)\}$ are $B$-connected, we can find the correct subsets by recursively re-splitting each subgraph in the same manner. This recursive re-splitting will always terminate because: (i) all subsets are non-empty and (ii) the trivial $k$-splitting $L_i = \{ l_i \}$ is always $B$-connected. Hence time $T$, chosen as the time of the last splitting, is bounded. 
\end{IEEEproof}

\newtheorem*{prop3}{Proposition~\ref{prop:tail_inequality}}
\begin{prop3}
	The probability that at iteration $i+1$ the load $b_{i+1}(l)$ exceeds the capacity by a factor of $\epsilon\, r_i(l)$ is 
\begin{align}
	\mathbf{Pr}(b_{i+1}(l) \geq C + \epsilon \, r_i(l)) &\leq e^{- 2\, |M(l)|\, \Phi(\epsilon)}, 	
\end{align}
where $\Phi(\epsilon) = \left(\frac{\epsilon \, r_i(l)}{\Delta - \delta}\right)^{\hspace*{-1mm}2}$ and $\delta$, $\Delta$ is the minimum and maximum degree of the vertices in $M(l)$, respectively. 
\end{prop3}

\begin{IEEEproof}
Let $X_{v} $ be a random variable which becomes $0$ when vertex $v$ does not
migrate and $deg(v)$ otherwise.  The expected value of $X_{v}$ is
\begin{align*}
 	\mathbf{E}(X_v) = 0 \cdot (1 - p) + deg(v)\, p = deg(v) \, p.
\end{align*}
The total load carried by the vertices that migrate is described by the random
variable $X = \sum_{v \in M(l)} X_{v}$ and has expected value
\begin{align*}
	\mathbf{E}(X) = \mathbf{E}\left(\sum_{v \in M(l)} \hspace{-2mm}X_v\right) = \sum_{v \in M(l)} \hspace{-2mm} \mathbf{E}(X_v) = p \hspace{-1.5mm} \sum_{v \in M(l)} \hspace{-2.0mm} deg(v) = r(l).
\end{align*}
We want to bound the probability that $X$ is larger than
$r(l)$, that is, the number of edges that migrate to $l$ exceeds the remaining
capacity of $l$.  Using Hoeffding's method, we have that for any $t > 0$,
\begin{align*}
	\mathbf{Pr}(X - \mathbf{E}(X) \geq t) &\leq \text{exp}\hspace*{-0.5mm}\left({-\frac{2 |M(l)|^2 t^2}{\sum\limits_{v \in M(l)} \hspace*{-2mm}(\Delta - \delta)^2}}\right) = \text{exp}\hspace*{-0.5mm}\left({-\frac{2 |M(l)| t^2}{(\Delta - \delta)^2}}\right),
\end{align*}
where $\delta$ and $\Delta$ are the minimum and maximum degree of the vertices in $M(l)$, respectively. Setting $t = \epsilon\,\mathbf{E}(X)$, we obtain the desired upper bound:
\begin{align*}
	\mathbf{Pr}(X \geq (1+\epsilon)\mathbf{E}(X)) &= \mathbf{Pr}(X + b(l) \geq C + \epsilon r(l) ) \nonumber\\
		&\leq \text{exp}\hspace*{-0.5mm}\left({- 2 \,|M(l)|\left(\frac{\epsilon \, r(l)}{\Delta - \delta}\right)^{\hspace*{-1mm}2}}\right)
\end{align*}
\end{IEEEproof}

%We can therefore conclude that, with high probability at each iteration \sys does not violate the partition capacity. To give an example, consider that $M(l) = 200$ vertices with minimum and maximum degree $\delta =1$ and $\Delta = 500$, respectively, want to migrate to partition l. The probability that, after the migration, the capacity of l exceeds $1.2 R(l) + B(l) = C + 0.2 R(l)$ is smaller than 0.2, where the probability that the capacity of l exceeds $C + 0.4 R(l)$ is smaller than 0.0016. Note that this is an upper bound, and in Section \ref{sec:eval:balance} we show experimentally that unbalance is in practice low.

%\renewcommand{\thesection}{\Alph{section}}
%\setcounter{section}{2} 
%\section*{APPENDIX B}
%\input{vldb_revision}

\end{document}